\documentclass[journal,singlecoloumn]{IEEEtran}
\usepackage[dvipsnames,svgnames,rgb]{xcolor}

\usepackage{amsmath, amssymb, amsthm, amsfonts, mathtools}

\usepackage{enumitem}
\usepackage{hyperref}
\usepackage{tikz}
\usetikzlibrary{arrows.meta, positioning, shapes.misc}
\usetikzlibrary{decorations.markings}
    \tikzset{cross/.style={cross out, draw=black, minimum
        size=2*(#1-\pgflinewidth), inner sep=0pt, outer sep=0pt},
cross/.default={3pt}}

\tikzset{ball/.style={circle, draw, fill=black,inner sep=0pt, minimum width=4pt}}
\tikzset{nd/.style={inner sep=1pt}}
\tikzset{>=Latex}
\tikzset{
  set arrow inside/.code={\pgfqkeys{/tikz/arrow inside}{#1}},
  set arrow inside={end/.initial=>, opt/.initial=},
  /pgf/decoration/Mark/.style={
    mark/.expanded=at position #1 with
    {
      \noexpand\arrow[\pgfkeysvalueof{/tikz/arrow inside/opt}]{\pgfkeysvalueof{/tikz/arrow inside/end}}
    }
  },
  arrow inside/.style 2 args={
    set arrow inside={#1},
    postaction={
      decorate,decoration={
        markings,Mark/.list={#2}
      }
    }
  },
}

\usepackage[disable]{todonotes}
\newcommand{\Hb}{H_b}
\newcommand{\Hsum}{H_{\mathrm{sum}}}
\newcommand{\W}{W}
\newcommand{\Wstar}{W^*}
\newcommand{\eps}{\epsilon}
\newcommand{\K}{\mathbf{K}}
\newcommand{\ki}{k}
\newcommand{\Ptrue}{P_{\mathrm{true}}}
\newcommand{\Pnom}{P_{\mathrm{nom}}}
\newcommand{\Dsum}{D_{\mathrm{sum}}}
\newcommand{\Nsum}{N_{\mathrm{sum}}}
\newcommand{\CRHP}{\ensuremath{\overline{\mathbb{C}}_{+}}}
\newcommand{\OLHP}{\ensuremath{\mathbb{C}_{-}}}

\newcommand\J{\mathrm j}
\newcommand\abs[1]{\ensuremath{\left\lvert#1\right\rvert}}

\DeclareMathOperator\diag{diag}

\DeclareMathOperator\Res{Res}
\DeclareMathOperator\Z{Z}
\DeclareMathOperator\len{len}

\newcommand{\absv}[1]{\lvert #1\rvert}

\newtheorem{theorem}{Theorem}[section]
\newtheorem{lemma}[theorem]{Lemma}
\newtheorem{proposition}[theorem]{Proposition}
\newtheorem{definition}[theorem]{Definition}

\newtheorem{assumption}[theorem]{Assumption}
\newtheorem{remark}[theorem]{Remark}

\newcommand\GB[1]{{\color{black}#1}}
\newcommand\rev[1]{{\color{black}#1}}
\makeatletter
\xdef\@endgadget#1{{\unskip\nobreak\hfil\penalty50\hskip1em\hbox{}\nobreak\hfil#1\parfillskip=0pt\finalhyphendemerits=0\par}}
\newcommand\@Endofsymbol{$\triangledown$}
\newcommand\Endofremark{\@endgadget{\@Endofsymbol}}
\makeatother
\title{On the NP-Hardness of Unconstrained Static Output Feedback Stabilization}
\author{}
\date{\today}

\definecolor{subsectioncolor}{rgb}{0,0,0}
\definecolor{mblue}{rgb}{0,0,0}
\definecolor{nblue}{rgb}{0,0,0}
\makeatletter\let\ps@titlepagestyle\ps@headings\makeatother
\begin{document}
\author{Gal Barkai$^1$, \IEEEmembership{Member, IEEE}, Iman Shames$^2$, \IEEEmembership{Member, IEEE} 
\thanks{$^1$Universit\'e de Lorraine, CNRS, CRAN, email: \textsl{gal.barkai@univ-lorraine.fr}.}
\thanks{$^2$University of Melbourne, Melbourne, Australia, email: \textsl{iman.shames@unimelb.edu.au}.}
}
\maketitle

\begin{abstract}
In this paper we give a polynomial-time reduction from the Subset Sum Problem to unconstrained static output feedback stabilization, establishing its NP-hardness. The construction uses two  scalar building blocks to impose approximate discrete choices and a weighted-sum constraint.  Once the problem is encoded, we relate the stability of the $N+1$ decoupled loops encoding the problem to that of a coupled plant. This is accomplished by leveraging frequency separation via a band-pass transformation, and comparing the root counts of the true characteristic polynomial with that of the different blocks in different regions of the right half-plane. The construction then naturally bounds every stabilizing gain and gives explicit modulus-margin estimates, ensuring that the plant data have polynomial binary encoding length.
\end{abstract}
\section{Introduction}\label{sec:intro}
Static output feedback stabilization is concerned with the question whether a static gain matrix renders a given linear time-invariant plant internally stable. The problem has been studied since the early 1960s; see the survey~\cite{SADG:97}. It is decidable, since the existence of a stabilizing gain is a first-order sentence over the reals and quantifier elimination applies, \cite[Sec.~IV]{ABJ:75} and \cite[Sec.~3.4]{BT:00}. The complexity of related design problems has been the target of investigation for the last few decades.  Checking robust nonsingularity of interval matrices \cite{PR:93} in addition to several robust stability analysis \cite{Nem:93} problems are shown to be NP-hard. Deciding feasibility of bilinear matrix inequalities was proved to be NP-hard in \cite{TO:95}. While, static output feedback stabilizability is equivalent to the feasibility of a bilinear matrix inequality, the hardness of the general bilinear matrix inequality problem does not address the complexity of the stabilization problem itself. NP-hardness of simultaneous stabilization by static output feedback was established in \cite{TO:95} and \cite{BT:97}. In \cite{BT:97}, NP-hardness of stabilization by static state or output feedback when the entries of the gain matrix are constrained to given intervals is proved. Later, NP-hardness of static output feedback pole placement was proved in~\cite{Fu:04}. Pole placement prescribes closed-loop eigenvalues exactly, while stabilization requires only that all eigenvalues lie in the open left half-plane. In \cite{Fu:04} it was stated that neither problem reduces to the other in an obvious way. The unconstrained stabilization problem remained open. It was posed as Open Problem~1 in~\cite[Ch.~11]{OpenProb:99}, and it was conjectured in \cite[Sec.~1]{BT:97} that it is NP-hard.  We prove its NP-hardness in this paper.

Our reduction is motivated by two main ideas: i) the hardness argument of \cite{BT:97} gains with interval constraints, and ii) frequency domain notions such a frequency separation. Given an instance with $N$ weights, we construct a rational plant with one input, $N$ outputs and $10N+3$ states. Each output contains a band-pass copy of a scalar prototype whose stabilizing gains lie in two narrow intervals around $1$ and $2$, together with a weighted copy of a shared summing block. The latter imposes a narrow constraint on $\W=\sum_i a_i\ki_i$ in the comparison system. The interval widths leave enough slack to round a feasible real gain to a subset-sum solution. No gain constraint is imposed on the SOFS instance; the bounds needed in the analysis must follow from stability of the constructed plant. The scalar loops are coupled in the plant, so their stability conditions cannot simply be combined. We separate their frequency ranges and compare the true characteristic polynomial with one scalar factor at a time, retaining the other open-loop denominators. On each comparison contour, the selected return difference dominates the other loop terms whenever its modulus margin is large enough. Rouch\'e's theorem then preserves the local right half-plane root count within each region. This local comparison is needed in the reverse direction, where some gains may lie near stability crossings and need not have positive margins. Bounds on all stabilizing gains and explicit scalar margin estimates allow the required frequencies to be chosen with polynomial binary encoding length.

Section~\ref{sec:prob} gives the decision problems and rounding argument. Sections~\ref{sec:blocks} and~\ref{sec:equiv} construct the plant and establish the local comparison. Section~\ref{sec:reduction} chooses its parameters and proves both directions of the reduction; the appendices contain the prototype calculations and frequency-separation estimates.
\begin{remark}[Novelty]
The construction in this paper has been developing for approximately a year, during which the question of NP hardness has remained open. During the final revising stages of this manuscript, we have been made aware of two AI-assisted preprints addressing the same problem appearing in recent weeks, with \cite{Lof:26} proposing a reduction from Betweenness, while \cite{ACNT:26} claims strong hardness via reduction from 3SAT.  \Endofremark
\end{remark}
%
\section{Problem statements and the continuous relaxation}\label{sec:prob}
\rev{Write $\CRHP=\{s\in\mathbb C:\operatorname{Re}s\ge0\}$ and
$\OLHP=\{s\in\mathbb C:\operatorname{Re}s<0\}$. A polynomial is Hurwitz if all its roots lie in $\OLHP$, and a square matrix is Hurwitz stable if all its eigenvalues do. For a polynomial $P$, $\Z(P;S)$ counts its roots in $S$ with multiplicity. We write $\mathbb Z_+$ for the positive integers and $x_i$ for the $i$-th entry of a vector $x$.}

\rev{We also use one algebraic tool to quantify how far scalar characteristic polynomials remain from imaginary-axis zeros. For $\varphi(y)=\sum_{i=0}^m\varphi_i y^i$ and $\psi(y)=\sum_{j=0}^n\psi_j y^j$, with $m,n\ge1$ and $\varphi_m\psi_n\ne0$, the \emph{resultant} $\Res_y(\varphi,\psi)$ is the determinant of the Sylvester matrix. Its rows are the coefficient vectors of $y^{n-1}\varphi,\ldots,\varphi,y^{m-1}\psi,\ldots,\psi$ in the basis $y^{m+n-1},\ldots,1$, and it vanishes exactly when $\varphi$ and $\psi$ have a common complex root~\cite[Sec.~4.1, Theorem~4.1]{Stu:02}. The subscript identifies the eliminated variable.
}

\rev{The input data of the following decision problems have finite binary encodings; $\len(\mathcal I)$ denotes the encoding length of an instance $\mathcal I$. Note that the feedback gain is allowed to be real and is not part of the input.}
\begin{definition}[Unconstrained SOFS]\label{def:sofs}
Given $A\in\mathbb Q^{n\times n}$, $B\in\mathbb Q^{n\times m}$ and $C\in\mathbb Q^{\ell\times n}$, decide whether there exists $\K\in\mathbb R^{m\times\ell}$ such that $A+B\K C$ is Hurwitz stable. 
\end{definition}
\begin{definition}[Subset Sum Problem (SSP)]\label{def:ssp}
Given positive integers $a_1,\ldots,a_N$ and $S$, decide whether there exists $x\in\{0,1\}^N$ such that $\sum_i a_ix_i=S$.
\end{definition}
SSP is NP-complete~\cite[Problem SP13]{GJ:79}, and its input length is the total bit length of $S$ and the $a_i$. We establish hardness of SOFS by mapping each SSP instance to a rational plant in polynomial time, preserving both YES and NO answers.
\begin{definition}[Karp reduction]\label{def:karp}
A Karp reduction from a decision problem $\Pi_0$ to a decision problem $\Pi$ is a polynomial-time computable map $f$ such that an instance $\mathcal I$ is a YES instance of $\Pi_0$ if and only if $f(\mathcal I)$ is a YES instance of $\Pi$. Time is measured in the binary input length.
\end{definition}
A problem is in NP if its YES instances have certificates of polynomial length, verifiable in polynomial time. It is \emph{NP-hard} if every problem in NP admits a Karp reduction to it, and \emph{NP-complete} if it is also in NP~\cite{Gol:10}. Since reductions can be composed, a Karp reduction from SSP suffices to prove that SOFS is NP-hard, it does not establish that SOFS is in NP. In our construction, one direction turns a SSP  solution into a stabilizing gain, while the other direction must recover a SSP solution from any stabilizing real gain. The latter direction is essential because no constraints are imposed on $\K$ in Definition~\ref{def:sofs}. Polynomial-time construction also bounds the length of the rational plant data, even when their magnitudes are large.
%
\subsection{The continuous subset sum problem}\label{sec:cssp}
For a single-input plant with $N$ outputs, the gain $\K=(\ki_1,\ldots,\ki_N)$ is a row vector. To connect its real entries with the discrete choices in SSP, we replace those choices by narrow open intervals and retain enough slack to round back to an exact solution. This is the continuous relaxation of SSP.
\begin{definition}[Continuous SSP (CSSP)]\label{def:cssp}
Given $a\in\mathbb Z_+^N$, $\Wstar\in\mathbb Z_+$ and rational $\eps,\delta\in(0,1)$, decide whether there exists $\K\in\mathbb R^{1\times N}$ with $\ki_i\in(1-\eps,1+\eps)\cup(2-\eps,2+\eps)$ for all $i$ and
$\abs{\sum_ia_i\ki_i-\Wstar}<\delta$.
\end{definition}
\begin{lemma}[Rounding for CSSP]\label{lem:cssp}
Let $(a,\Wstar,\eps,\delta)$ be a CSSP instance with $\delta+\eps\sum_i a_i<1$. It is a YES instance if and only if there exists $\hat\K\in\{1,2\}^{1\times N}$ with $\sum_i a_i\hat\ki_i=\Wstar$.
\end{lemma}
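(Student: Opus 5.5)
The plan is to prove the two implications separately; both are short, and the only real content is a rounding step in the forward direction.

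\emph{Reverse direction.} Suppose $\hat\K\in\{1,2\}^{1\times N}$ satisfies $\sum_i a_i\hat\ki_i=\Wstar$. Take $\K=\hat\K$ itself. Each $\hat\ki_i$ lies in $(1-\eps,1+\eps)\cup(2-\eps,2+\eps)$, since $\eps>0$. Moreover $\abs{\sum_i a_i\hat\ki_i-\Wstar}=0<\delta$, so the instance is YES. This direction does not use the hypothesis $\delta+\eps\sum_ia_i<1$.

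\emph{Forward direction.} Let $\K$ witness a YES instance. Because $\eps<1$, wait, more precisely because $\eps<1/2$ is needed, I first check that the two intervals are disjoint. Since $a_i\ge1$, the hypothesis gives $\eps\le\eps\sum_i a_i<1$, which yields only $\eps<1$; disjointness needs $\eps<1/2$. I avoid the issue by rounding directly. For each $i$, choose $\hat\ki_i\in\{1,2\}$ with $\abs{\ki_i-\hat\ki_i}<\eps$, which exists by membership in the union; if both values qualify, pick either one. Then
\[
\Bigl|\sum_i a_i\hat\ki_i-\Wstar\Bigr|\le\sum_i a_i\abs{\hat\ki_i-\ki_i}+\Bigl|\sum_i a_i\ki_i-\Wstar\Bigr|<\eps\sum_i a_i+\delta<1 .
\]
The sum $\sum_i a_i\hat\ki_i-\Wstar$ is an integer, because the $a_i$, the $\hat\ki_i$ and $\Wstar$ are all integers. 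An integer of absolute value less than $1$ is zero, so $\sum_i a_i\hat\ki_i=\Wstar$.

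The only potential obstacle is the ambiguity in choosing $\hat\ki_i$ when $\eps\ge1/2$, which is handled by allowing an arbitrary choice. The estimate needs only $\abs{\ki_i-\hat\ki_i}<\eps$, not uniqueness, so the rest of the argument is routine. I would also remark that $\hat\K\in\{1,2\}^N$ corresponds to $x=\hat\K-\mathbf 1\in\{0,1\}^N$ with $\Wstar=S+\sum_ia_i$. That correspondence connects this lemma to Definition~\ref{def:ssp}, although it is not part of the lemma itself.
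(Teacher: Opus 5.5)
Your proof is correct and is the same as the paper's. One direction uses $\hat\K$ itself as the witness. The other rounds each $\ki_i$ to a point of $\{1,2\}$ within $\eps$, applies the triangle inequality to get $\abs{\sum_i a_i\hat\ki_i-\Wstar}<\delta+\eps\sum_i a_i<1$, and concludes by integrality.

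Your observation that the two intervals need not be disjoint is correct and is left implicit in the paper. Any admissible choice of $\hat\ki_i$ works, so nothing else is needed. In the write-up, delete the stray ``wait'' aside.
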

\begin{proof}
First, note that such a $\hat\K$ itself satisfies the CSSP constraints. Conversely, for any feasible $\K$, choose $\hat\ki_i\in\{1,2\}$ with $\abs{\ki_i-\hat\ki_i}<\eps$. Then
\[
\begin{aligned}
\abs{\sum_i a_i\hat\ki_i-\Wstar}
&\leq \abs{\sum_i a_i\ki_i-\Wstar}
 +\sum_i a_i\abs{\hat\ki_i-\ki_i}\\
&<\delta+\eps\sum_i a_i<1.
\end{aligned}
\]
Since the expression inside the absolute value is an integer, it is zero.
\end{proof}
For a given SSP instance, set $\Wstar=S+\sum_i a_i$, $\delta=1/4$ and $\eps=1/(8\sum_i a_i)$. These rational parameters have polynomial bit length and satisfy $\delta+\eps\sum_i a_i=3/8<1$. With $x_i=\hat\ki_i-1$, Lemma~\ref{lem:cssp} makes the two instances equivalent. in the sense that the CSSP instance is a YES instance if and only if the SSP instance is a YES instance. The shift to $\{1,2\}$ matches the two stabilizing gain intervals of the scalar prototype in Section~\ref{sec:blocks}. This rounding argument adapts~\cite[Lemma~1]{TO:95} to the shifted target and the stated slack.

\section{Scalar building blocks and the plant family}\label{sec:blocks}
As discussed in Sec.~\ref{sec:intro}, previous NP-hardness results, for example~\cite{BT:97}, establish hardness of stabilization problems in which the feedback gain is subject to prescribed interval constraints. Inspired by these constructions, we seek to construct a family of \emph{plants} that enforce similar constraints through the requirement of closed-loop stability. The gain $\K$ is otherwise unconstrained, so these restrictions must hold for every stabilizing gain, rather than being imposed as part of the decision problem.

To this end, we construct the plant family out of two particular scalar building blocks. First, we construct a scalar prototype, denoted $\Hb$, whose stabilizing gains form two narrow intervals around $1$ and $2$. The prototype as well as the width of the stability intervals are parametrized by $\epsilon$, one of the two slackness parameters defined for the continuous SSP in \S~\ref{sec:cssp}. Next, we construct a second prototype, $\Hsum$, whose stabilizing gains form a narrow interval around the SSP target $\Wstar$. The prototype is parameterized via $\Wstar$ and $\delta$, the second slackness parameter, the latter also determines the width of the interval. We than introduce a band-pass transformation, allowing us generate frequency-separated copies of $\Hb$ with the same Nyquist plot and stability intervals. Independently closing these scalar loops would impose the desired constraints on the individual gains and their weighted sum.

\subsection{The two-interval prototype}\label{subsec:Hbprot}

We first construct a scalar system whose stabilizing gains form two disjoint intervals around $1$ and $2$. To this end, we prescribe the gains at which the closed-loop poles reach the imaginary axis, and then verify that the intervening stability intervals are the desired ones. For $\eps\in(0,1/8]$, write
\[
\begin{gathered}
D_b(p)=p^5+\sum_{j=0}^4d_jp^j,\quad 
N_b(p)=-p^3+\sum_{j=0}^2n_jp^j,\\
P_\ki\coloneqq D_b+\ki N_b.
\end{gathered}
\]
We choose the eight coefficients so that
\[
\begin{gathered}
P_{1-\eps}(0)=0, \quad P_{1+\eps}(\J)=0,\quad P_{2-\eps}(\J\sqrt2)=0,\\ 
P_{2+\eps}(\J\sqrt3)=0,\quad 
N_b(3)=0.
\end{gathered}
\]
Since the coefficients are real, these are eight real linear conditions. The first four prescribe imaginary-axis roots at the endpoints of the desired stabilizing intervals, while the last places a zero in the open right half-plane.  The interpolation conditions have a unique solution for every $\eps\in(0,1/8]$, and we define $\Hb\coloneqq N_b/D_b$ using this solution. For rational $\eps$, its coefficients are rational and can be computed in polynomial time with bit length polynomial in that of $\eps$. The explicit construction is given in Appendix~\ref{app:Hb}, together with the verification that the prescribed crossings yield exactly the required stability intervals.
\begin{proposition}[Prototype $\Hb$]\label{prop:Hb}
Let $\eps\in(0,1/8]$. The prototype constructed above has the following properties.
\begin{enumerate}[label=(\roman*)]
\item\label{propitem:hurwitz2} For every $\ki\in\mathbb{R}$, the polynomial $P_\ki$ is Hurwitz if and only if
\[
\ki\in(1-\eps,1+\eps)\cup(2-\eps,2+\eps).
\]
Moreover, $P_\ki$ has a root on the imaginary axis if and only if $\ki\in\{1-\eps,1+\eps,2-\eps,2+\eps\}$.

\item \label{propitem:nothurwitz} The denominator $D_b$ has exactly one root in $\CRHP$; it is real and lies in $(0,1)$. In particular, $D_b$ is not Hurwitz and has no roots on the imaginary axis.

\item\label{propitem:hurwitz} The numerator has the factorization
\[
\begin{gathered}
N_b(p)=-(p-3)(p^2+\zeta_1p+\zeta_0),\\
\zeta_1\coloneqq \frac{4\eps+5}{7(1-4\eps)},\qquad
\zeta_0\coloneqq \frac{4(2-11\eps)}{7(1-4\eps)}.
\end{gathered}
\]
Since $\zeta_1,\zeta_0>0$, the quadratic factor is Hurwitz. Thus $p=3$ is the only root of $N_b$ in $\CRHP$, and it is simple.

\item\label{propitem:noroots} The denominator has no roots in $\{p:\abs{p-3}\leq7/8\}$, and
\[
\abs{\Hb(p)}\ge\frac1{500} \qquad\text{whenever}\qquad \frac{5}{8}\leq\abs{p-3}\leq\frac{7}{8}.
\]
\item\label{propitem:coeff} The coefficients satisfy
\[
\begin{gathered}
0<d_4\leq\frac{6}{7},\quad
 7\leq d_3\leq 9,\quad 
 S_d\coloneqq \sum_{j=0}^4\abs{d_j}\leq 19,\\
 S_n\coloneqq \sum_{j=0}^2\abs{n_j}\leq 10.
\end{gathered}
\]
The individual coefficient bounds are given in Appendix~\ref{app:Hb}.
\end{enumerate}
\end{proposition}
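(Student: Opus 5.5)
Since all eight interpolation conditions are linear in $(d_0,\dots,d_4,n_0,n_1,n_2)$, the plan is to first solve them in closed form and then read off (iii)--(v) directly from the formulas. Condition (i) is treated separately by a root-locus crossing argument.

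\textbf{Solving the interpolation system.} Use $N_b(3)=0$ to write $N_b(p)=-(p-3)(p^2+\zeta_1p+\zeta_0)$ with two unknowns $\zeta_0,\zeta_1$. This already removes $n_0,n_1,n_2$ and shows the factorization in (iii) holds once $\zeta_0,\zeta_1$ are identified. The condition $P_{1-\eps}(0)=0$ gives $d_0=-(1-\eps)N_b(0)=-3(1-\eps)\zeta_0$.

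At $p=\J\omega$ with $\omega^2\in\{1,2,3\}$, split $D_b(\J\omega)+\ki N_b(\J\omega)=0$ into real and imaginary parts. The real part is even in $\omega$ and involves $d_0,d_2,d_4$, and the real part of $N_b$. The imaginary part involves $d_1,d_3$, the $\omega^5$ term, and the imaginary part of $N_b$. This yields six linear equations in $d_1,\dots,d_4,\zeta_0,\zeta_1$, with $d_0$ substituted.

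The system decouples into a Vandermonde-type structure in $\omega^2$. Eliminating the $d_j$ from the three real equations and the three imaginary equations should leave a $2\times2$ system for $(\zeta_0,\zeta_1)$. Its determinant should be a positive multiple of $1-4\eps$, which gives uniqueness for $\eps\le 1/8$ and the stated formulas. Back-substituting gives each $d_j$ as a rational function of $\eps$ with denominator $7(1-4\eps)$. From these closed forms, (v) follows by bounding each coefficient on $(0,1/8]$, using monotonicity in $\eps$ or endpoint evaluation. In particular, $\zeta_0>0$ because $2-11\eps>0$ there.

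\textbf{Proving (i).} Use a root-locus crossing argument. The roots of $P_\ki$ depend continuously on $\ki$, and the degree stays $5$ for every $\ki$ because $\deg N_b=3<5$. So the number of $\CRHP$ roots changes only when a root lies on $\J\mathbb R$. A crossing at $\J\omega$ requires $D_b(\J\omega)/N_b(\J\omega)=-\ki$ to be real. Hence the crossings are determined by the real roots $\omega$ of $\operatorname{Im}\bigl(D_b(\J\omega)\overline{N_b(\J\omega)}\bigr)=0$, an odd polynomial in $\omega$. After factoring out $\omega$, what remains is a polynomial in $\omega^2$ of degree at most $3$ (degree $8$ total, odd). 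We already know its roots $\omega^2=1,2,3$, so there are no other crossing frequencies. Also $\omega=0$ gives only $\ki=1-\eps$. This proves the second sentence of (i), after checking that the real and imaginary parts do not vanish simultaneously at other $\ki$ and that no $N_b(\J\omega)=0$ occurs.

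Next, count the $\CRHP$ roots at one reference gain. At $\ki=0$ the count is $1$, by (ii). Then track the crossing directions via the sign of $\operatorname{Re}\,dp/d\ki=-\operatorname{Re}\bigl(N_b/P_\ki'\bigr)$ at each crossing. This yields the counts $1,0,2,0,2$ across the intervals; at $\ki=1-\eps$ only a real root passes through $0$. Alternatively, compute the Hurwitz determinants as polynomials in $\ki$ and verify their signs interval by interval. The main obstacle is this direction bookkeeping, together with confirming uniqueness of the crossing frequencies. I would first try an explicit sign check of the derivative at the four crossing points using the closed forms.

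\textbf{Proving (ii) and (iv).} For (ii), apply a Routh--Hurwitz or Sturm count to $D_b$ with the explicit coefficients. Since $d_0<0$ while the leading coefficient is positive, there is a real positive root. Checking $D_b(0)<0<D_b(1)$ places it in $(0,1)$. The counting argument from (i) at $\ki=0$, or the Routh table, shows it is the only $\CRHP$ root.

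For (iv), bound $\abs{D_b(p)}$ from below on the disk $\abs{p-3}\le 7/8$. Expand $D_b$ around $p=3$ and show that the constant term $D_b(3)$ dominates the sum of the remaining Taylor terms. Then bound $\abs{N_b(p)}\ge \abs{p-3}\,\abs{p^2+\zeta_1p+\zeta_0}\ge(5/8)\cdot m$ on the annulus and $\abs{D_b}\le M$ from the coefficient bounds, giving $\abs{\Hb}\ge 5m/(8M)\ge 1/500$.
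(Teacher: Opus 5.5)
\textbf{The step that fails: first claim of (iv).} You propose to lower-bound $\abs{D_b}$ on $\abs{p-3}\le7/8$ by showing that $D_b(3)$ dominates the remaining Taylor terms at $3$. This cannot work, because the first-order term alone is already larger. At $\eps=1/8$ one has $D_b(3)\approx518.6$ but $\tfrac78D_b'(3)\approx607.7$, and smaller $\eps$ behave the same way. The claim is a root-location fact, not a coefficient estimate. The disk lies in $\{\operatorname{Re}p\ge17/8\}$, and by (ii) every root of $D_b$ lies in $\OLHP$ or in $(0,1)$. That is the paper's one-line argument, and it is available to you once (ii) is proved.

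\textbf{A caution on the second claim of (iv).} The unspecified constant $m$ is tight. With $M<1636$ from the coefficient bounds on $\abs p\le31/8$, you need $m\gtrsim5.24$. Using only $\operatorname{Re}p\ge17/8$ and Hurwitzness of $p^2+\zeta_1p+\zeta_0$ gives $m\ge(17/8)^2\approx4.52$, which is not enough. You need something sharper, such as the paper's observation: the discriminant is negative, so both roots have real part $-\zeta_1/2\le-5/14$, and $m\ge(17/8+5/14)^2$.

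\textbf{Circularity in (ii).} You must use the Routh table there. Deriving uniqueness of the $\CRHP$ root ``from (i) at $\ki=0$'' is circular, because your root count in (i) takes the value at $\ki=0$, supplied by (ii), as its reference.

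\textbf{Comparison of the rest with the paper.} Apart from this, your route is sound and differs from the paper in several places.
\begin{itemize}
\item Solving the interpolation system. Your $\zeta$-parametrization of $N_b$ is efficient. Eliminating $(d_4,d_2)$ with weights $(3,-3,1)$ and $(d_3,d_1)$ with weights $(1,-2,1)$ leaves $(1-4\eps)(\zeta_1+2\zeta_0-3)=0$ and $(1-4\eps)(3\zeta_1-\zeta_0)=1+8\eps$. The determinant of this system is a nonzero multiple of $(1-4\eps)^2$, and its solution is the stated one.
\item Crossing gains. You use $\operatorname{Im}\bigl(D_b(\J\omega)\overline{N_b(\J\omega)}\bigr)=\omega\,c(\omega^2)$, where $c$ is a cubic whose roots $1,2,3$ are known from the interpolation. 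The paper instead uses the resultant $\Res_y(f,h)$ as a polynomial in $\ki$. Your route is cheaper, but you must state that $c\not\equiv0$; its leading coefficient is $-(n_2+d_4)<0$.
\item What the paper's choice buys. The explicit factorization of $\Res_y(f,h)$ is reused quantitatively in Lemma~\ref{lem:explicit}, where $\abs{\Res_y(f,h)}\ge\hat\tau^3$ drives the margin bounds. Your argument does not supply that.
\item Identifying the stability intervals. Your crossing-direction bookkeeping replaces the paper's Routh array evaluated at one gain per component, namely $0,1,3/2,2,\ki_c+1$. It still requires checking simplicity of the axis roots and $\operatorname{Re}(dp/d\ki)\neq0$ at all four crossings. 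The counts $1,0,2,0,2$ you predict are correct.
\end{itemize}
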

Note that only the first property encodes the two choices of a binary variable and that there are many possible other systems satisfying this property. The reason for the particular solution choice is to obtain the rest of the properties, is to facilitate the reduction, and will be explained in detail in the following sections. 
\subsection{The summing block}
The following proposition constructs the block that constrains the weighted gain
sum.
\begin{proposition}[Summing block $\Hsum$]\label{prop:hsum}
Let $\Wstar\in\mathbb{Z}_{+}$ and $\delta\in(0,1)$, and set
$q_0\coloneqq (\Wstar)^2-\delta^2>0$.vDefine
\[
\Hsum(s)\coloneqq \frac{s^2-s}{s^3+2\Wstar s+q_0}=:\frac{\Nsum(s)}{\Dsum(s)} .
\]
Then $\Dsum$ has no roots on the imaginary axis, and for $\W\in\mathbb{R}$ the
polynomial
\[
\begin{aligned}
\chi_{\mathrm{sum}}(s,\W)&\coloneqq \Dsum(s)+\W\Nsum(s)\\
&=s^3+\W s^2+(2\Wstar-\W)s+q_0
\end{aligned}
\]
is Hurwitz if and only if $\W\in(\Wstar-\delta,\,\Wstar+\delta)$.
Moreover $\chi_{\mathrm{sum}}(s,\W)$ has a root on the imaginary axis if and
only if $\W\in\{\Wstar-\delta,\,\Wstar+\delta\}$.
\end{proposition}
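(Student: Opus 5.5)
The plan is to reduce everything to the Routh--Hurwitz criterion for real cubics, plus a direct computation of imaginary-axis roots. First, I will show that $\Dsum(s)=s^3+2\Wstar s+q_0$ has no imaginary-axis roots. Substituting $s=\J\omega$ with $\omega\in\mathbb R$ gives
\[
\Dsum(\J\omega)=q_0+\J(2\Wstar\omega-\omega^3).
\]
The real part is $q_0>0$, because $\Wstar\ge1>\delta$, so $\Dsum(\J\omega)$ never vanishes. Note that $\Dsum$ itself is not Hurwitz, since its $s^2$ coefficient is zero. This does not matter for the statement.

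Next comes the Hurwitz characterization. For a monic real cubic $s^3+c_2s^2+c_1s+c_0$, Routh--Hurwitz says it is Hurwitz if and only if $c_2>0$, $c_0>0$ and $c_2c_1>c_0$. Here $c_2=\W$, $c_1=2\Wstar-\W$ and $c_0=q_0>0$. The determinant condition reads
\[
\W(2\Wstar-\W)>(\Wstar)^2-\delta^2 \iff (\W-\Wstar)^2<\delta^2 \iff \W\in(\Wstar-\delta,\Wstar+\delta).
\]
On this interval $\W>\Wstar-\delta>0$, so $c_2>0$ holds automatically, and $c_1=2\Wstar-\W>\Wstar-\delta>0$ as well. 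This proves both directions of the equivalence.

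For the imaginary-axis statement, I will set $s=\J\omega$ in $\chi_{\mathrm{sum}}$, which gives
\[
\chi_{\mathrm{sum}}(\J\omega,\W)=\bigl(q_0-\W\omega^2\bigr)+\J\omega\bigl(2\Wstar-\W-\omega^2\bigr).
\]
At $\omega=0$ the value is $q_0\neq0$, so I may take $\omega\neq0$. Then the imaginary part forces $\omega^2=2\Wstar-\W$, which must be positive. Substituting into the real part gives $q_0=\W(2\Wstar-\W)$, that is, $(\W-\Wstar)^2=\delta^2$, so $\W=\Wstar\pm\delta$. Conversely, for $\W=\Wstar\pm\delta$ we have $\omega^2=\Wstar\mp\delta>0$. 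With this choice both the real and imaginary parts vanish, so a root $\pm\J\sqrt{\Wstar\mp\delta}$ exists. This gives the "if and only if".

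I do not expect a real obstacle here; the proof is a short calculation. The only points needing care are positivity facts. We need $q_0>0$ and $\Wstar\pm\delta>0$, which follow from $\Wstar\ge1>\delta$. We also need to check the sign condition $c_2>0$ alongside the determinant inequality in the Routh--Hurwitz step, so that no spurious $\W$ with $\W<0$ slips in.
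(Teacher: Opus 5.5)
Your proof is correct and follows essentially the same route as the paper. You use the Routh--Hurwitz conditions for a monic cubic together with $\W(2\Wstar-\W)-q_0=\delta^2-(\W-\Wstar)^2$, then separate the real and imaginary parts of $\chi_{\mathrm{sum}}(\J\omega,\W)$; the only cosmetic difference is that you check $\Dsum(\J\omega)\neq0$ directly from its real part $q_0>0$, whereas the paper reads it off as the case $\W=0$ of the same axis computation.
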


\begin{proof}
A monic cubic $s^3+a_2s^2+a_1s+a_0$ is Hurwitz if and only if $a_2>0$, $a_0>0$,
and $a_2a_1>a_0$.
Here $a_2=\W$, $a_1=2\Wstar-\W$, $a_0=q_0>0$, and
\[
a_2a_1-a_0=\delta^2-(\W-\Wstar)^2 .
\]
Let $\abs{\W-\Wstar}<\delta$.
Then $\W>\Wstar-\delta>0$ and $2\Wstar-\W>\Wstar-\delta>0$, because
$\delta<1\le\Wstar$.
All three conditions hold.
If $\abs{\W-\Wstar}\ge\delta$ then $a_2a_1\le a_0$ and the polynomial is not
Hurwitz.

Axis roots: $\chi_{\mathrm{sum}}(0,\W)=q_0\neq0$, so no root at the origin.
For $\nu\neq0$,
$\chi_{\mathrm{sum}}(\J\nu,\W)=\bigl(q_0-\W\nu^2\bigr)+\J\nu\bigl(2\Wstar-\W-\nu^2\bigr)$.
Both parts vanish exactly when $\nu^2=2\Wstar-\W>0$ and $\W(2\Wstar-\W)=q_0$,
that is $(\W-\Wstar)^2=\delta^2$.
Both conditions then hold, since $2\Wstar-\W=\Wstar\mp\delta>0$.
At $\W=0$ the same formula shows $\Dsum(\J\nu)\neq0$ for all $\nu$, so $\Dsum$
has no roots on the imaginary axis.
\end{proof}
%
\subsection{The band-pass transformation}\label{sec:bp}
For prescribed $B_w>0$ and center frequency $\omega>0$, define
\begin{equation}\label{eq:pw}
    p_\omega(s)\coloneqq \frac{s^2+\omega^2}{B_w s},
\qquad
\Hb^{\omega}(s)\coloneqq \Hb\bigl(p_\omega(s)\bigr).
\end{equation}
The map $s\mapsto p_\omega(s)$ is the low-pass to band-pass transformation of
analog filter design; see~\rev{\cite[Sec.~7.3.4, eq.~(7.42)]{CA:19book}}. 
Thus, $\Hb^{\omega}=N_{\omega}/D_{\omega}$ with
\begin{equation}\label{eq:NiDi}\begin{gathered}
D_{\omega}(s)\coloneqq \sum_{j=0}^{5}d_j\,(s^2+\omega^2)^j\,(B_ws)^{5-j},
\\
N_{\omega}(s)\coloneqq \sum_{j=0}^{3}n_j\,(s^2+\omega^2)^j\,(B_ws)^{5-j},
\end{gathered}\end{equation}
where $d_5\coloneqq 1$ and $n_3\coloneqq -1$. The polynomial $D_{\omega}$ is monic of degree $10$. The polynomial $N_{\omega}$ has degree $8$ and is divisible by $s^2$, while $D_{\omega}(0)=\omega^{10}$. The blocks of the plant are $\Hb^i\coloneqq \Hb^{\omega_i}$ with $N_i\coloneqq N_{\omega_i}$ and $D_i\coloneqq D_{\omega_i}$. We further define the following modulus margin, which is required for the arguments of Section~\ref{sec:equiv}.
\begin{definition}[Margin]\label{def:margin}
For a real rational strictly proper $F$ with no poles on the imaginary axis and
a gain $g\in\mathbb{R}$, set
\[
m_F(g)\coloneqq \inf_{\nu\in\mathbb{R}}\abs{1+gF(\J\nu)} .
\]
Write $m_b(\ki)\coloneqq m_{\Hb}(\ki)$ and $m_s(\W)\coloneqq m_{\Hsum}(\W)$.
\end{definition}
The following proposition collects the salient properties of the transformation
that will be used later in the reduction.
\begin{proposition}[Band-pass block]\label{prop:bpmap}
Let $\eps\in(0,1/8]$, $B_w>0$, $\omega>0$, and $\ki\in\mathbb{R}$.
\begin{enumerate}[label=(\roman*)]
\item For $s\neq0$,
$\operatorname{Re}p_\omega(s)=\frac{1}{B_w}\operatorname{Re}(s)\bigl(1+\omega^2/\abs{s}^2\bigr)$. Hence $p_\omega$ maps the open right half-plane to itself, the open left half-plane to itself, and the set $\{\J \nu: \nu\in\mathbb{R},\nu\neq 0\}$ to the imaginary axis. \label{propitem:bpmap1}
\item The 10 roots of $D_{\omega}+\ki N_{\omega}$ are exactly the solutions $s$ of
$s^2-B_wqs+\omega^2=0$, where $q$ ranges over the roots of $D_b+\ki N_b$. Consequently $D_{\omega}+\ki N_{\omega}$ is Hurwitz if and only if $D_b+\ki N_b$
is Hurwitz.\label{propitem:bpmap2}
\item $\inf_{\omega'}\abs{1+\ki \Hb^{\omega}(\J\omega')}=m_b(\ki)$. \label{propitem:bpmap3}
\item Let $(A_b,B_b,C_b)$ be the companion realization of $\Hb$ in controllable canonical form, of dimension $5$. Then
\[
\begin{gathered}
A_\omega=\begin{bmatrix}0&\omega I\\-\omega I&B_wA_b\end{bmatrix}, \quad 
B_\omega=\begin{bmatrix}0\\B_b\end{bmatrix},\\
C_\omega=\begin{bmatrix}0&B_wC_b\end{bmatrix}
\end{gathered}
\]
is a realization of $\Hb^{\omega}$ of dimension $10$, and $\det(sI-A_\omega)=D_{\omega}(s)$. \label{propitem:bpmap4}
\end{enumerate}
\end{proposition}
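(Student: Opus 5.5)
The plan is to check the four items in order. Items (i)–(iii) are direct computations with $p_\omega$. Item (iv) is a realization computation that I reduce to a Schur-complement identity.

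For (i), take $s=x+\J y\neq0$ and write
\[
p_\omega(s)=\frac{1}{B_w}\Bigl(s+\frac{\omega^2}{s}\Bigr),\qquad \frac{\omega^2}{s}=\frac{\omega^2\bar s}{\abs{s}^2}.
\]
Then $\operatorname{Re}p_\omega(s)=\frac{x}{B_w}\bigl(1+\omega^2/\abs{s}^2\bigr)$. Since the factor in parentheses is positive, the sign of the real part is preserved. In particular, the real part vanishes when $x=0$, so $\J\nu$ with $\nu\neq0$ is mapped to $\J(\nu^2-\omega^2)/(B_w\nu)$, a point of the imaginary axis.

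For (ii), expanding the sums in \eqref{eq:NiDi} gives
\[
D_\omega(s)+\ki N_\omega(s)=(B_ws)^5\bigl(D_b+\ki N_b\bigr)(p_\omega(s)),
\]
valid as a polynomial identity. This uses $d_5=1$, $n_3=-1$, and $\deg N_b\le 5$. Factor $D_b+\ki N_b=\prod_{r=1}^5(p-q_r)$; it is monic of degree $5$ because $N_b$ has degree $3$. Since $(B_ws)(p_\omega(s)-q_r)=s^2-B_wq_rs+\omega^2$, we obtain
\[
D_\omega+\ki N_\omega=\prod_{r=1}^5\bigl(s^2-B_wq_rs+\omega^2\bigr),
\]
which accounts for exactly 10 roots. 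For the Hurwitz equivalence, each quadratic has product of roots $\omega^2>0$ and sum $B_wq_r$.
\begin{itemize}
\item If $\operatorname{Re}q_r<0$, suppose some root $s$ satisfies $\operatorname{Re}s\ge0$. Then $s\neq0$ because $\omega^2\neq0$, and $p_\omega(s)=q_r$. By (i) this forces $\operatorname{Re}q_r\ge0$, a contradiction.
\item If $\operatorname{Re}q_r\ge0$, the two roots have sum $B_wq_r$ with nonnegative real part, so at least one of them lies in $\CRHP$.
\end{itemize}

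For (iii), use $\Hb^\omega(\J\omega')=\Hb(p_\omega(\J\omega'))$. As $\omega'$ ranges over $(0,\infty)$, the value $p_\omega(\J\omega')=\J(\omega'^2-\omega^2)/(B_w\omega')$ sweeps the whole imaginary axis continuously and surjectively. At $\omega'=0$, $\Hb^\omega(0)=0$ because $N_\omega$ is divisible by $s^2$ while $D_\omega(0)\neq0$, so $\abs{1+\ki\Hb^\omega(0)}=1$. This value equals the limit of $\abs{1+\ki\Hb(\J\nu)}$ as $\nu\to\infty$, since $\Hb$ is strictly proper. Negative $\omega'$ gives the same moduli by conjugate symmetry. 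Hence the two sets of values have the same infimum, namely $m_b(\ki)$. Here $\Hb$ has no imaginary-axis poles by Proposition~\ref{prop:Hb}\ref{propitem:nothurwitz}, so $m_b$ is well defined.

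For (iv), partition $sI-A_\omega$ into $5\times5$ blocks. The Schur complement with respect to the top-left block $sI$ is
\[
sI-B_wA_b+\frac{\omega^2}{s}I=B_w\bigl(p_\omega(s)I-A_b\bigr).
\]
This yields
\[
\det(sI-A_\omega)=s^5B_w^5\det\bigl(p_\omega(s)I-A_b\bigr)=(B_ws)^5D_b(p_\omega(s))=D_\omega(s),
\]
where the last equality is the identity from (ii) with $\ki=0$. For the transfer function, $B_\omega$ feeds only the bottom block and $C_\omega$ reads only the bottom block. The bottom-right block of the inverse is the inverse Schur complement, so
\[
C_\omega(sI-A_\omega)^{-1}B_\omega=B_wC_b\,B_w^{-1}\bigl(p_\omega(s)I-A_b\bigr)^{-1}B_b=\Hb(p_\omega(s)).
\]

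I expect the main obstacle to be the sign and continuity bookkeeping: the boundary case $\operatorname{Re}q_r=0$ in (ii), and the point $\omega'=0$ together with surjectivity onto the whole imaginary axis in (iii). The Schur-complement step is routine once the blocks are written out.
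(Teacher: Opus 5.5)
Your proposal is correct and follows essentially the same route as the paper: the same real-part computation for $p_\omega$, the same factorization of $D_\omega+\ki N_\omega$ into the quadratics $s^2-B_wq_rs+\omega^2$, the same reparametrization of the imaginary axis with the value $1$ at $\omega'=0$ recovered as a high-frequency limit, and the same Schur-complement (block-elimination) computation for the realization. The differences are minor. When $\operatorname{Re}q_r\ge0$ you use the root sum $B_wq_r$ to place at least one root in $\CRHP$, whereas the paper shows both roots lie there; either suffices for the Hurwitz equivalence. You should also state that the Schur-complement determinant identity first holds for $s\neq0$ and then extends to all $s$ because both sides are polynomials.
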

\begin{proof}
    \ref{propitem:bpmap1} It is a consequence of direct computation from $p_\omega(s)=\bigl(s+\omega^2/s\bigr)/B_w$ and
$\operatorname{Re}(1/s)=\operatorname{Re}(s)/\abs{s}^2$.

\ref{propitem:bpmap2}
\GB{Let $q_1,\dots,q_5$ be the roots of $D_b+\ki N_b$, counted with multiplicity. The polynomial identity
\[
D_{\omega}(s)+\ki N_{\omega}(s)
=\prod_{r=1}^{5}\bigl(s^2-B_wq_rs+\omega^2\bigr)
\]
shows that its roots are the corresponding quadratic roots.  Zero is not a root, becasue at $s=0$ the product equals $\omega^{10}$.} If $q\in\OLHP$, both solutions of $s^2-B_w q s+\omega^2=0$ lie in $\OLHP$, because a solution in $\CRHP\setminus\{0\}$ would give $p_\omega(s)\in\CRHP$ by \ref{propitem:bpmap1}, and $s=0$ is not a solution. If $q\in\CRHP$, both solutions lie in $\CRHP$ by the same argument applied to the left half-plane. Hence $D_{\omega}+\ki N_{\omega}$ is Hurwitz if and only if $D_b+\ki N_b$ is Hurwitz.

\ref{propitem:bpmap3}
 For $\omega'>0$, $p_\omega(\J\omega')=\J\Omega(\omega')$ with
$\Omega(\omega')=\bigl(\omega'-\omega^2/\omega'\bigr)/B_w$.
The map $\Omega$ increases strictly from $-\infty$ to $+\infty$ on $(0,\infty)$.
Hence
$\{\abs{1+\ki \Hb^{\omega}(\J\omega')}:\omega'>0\}=\{\abs{1+\ki \Hb(\J\Omega)}:\Omega\in\mathbb{R}\}$.
The value at $\omega'=0$ is $1$, and the right-hand set contains values
arbitrarily close to $1$, since $\abs{1+\ki\Hb(\J\Omega)}\to1$ as
$\abs\Omega\to\infty$.
Including this value therefore does not change the infimum.
Negative $\omega'$ give the same set.
The infima agree.

\ref{propitem:bpmap4}
In the Laplace domain the state equations are $sz_1=\omega z_2$ and
$sz_2=-\omega z_1+B_wA_bz_2+B_bu$.
Eliminating $z_1$ gives $p_\omega(s)z_2=A_bz_2+\frac{1}{B_w}B_bu$, hence
$y=B_wC_bz_2=C_b\bigl(p_\omega(s)I-A_b\bigr)^{-1}B_bu=\Hb\bigl(p_\omega(s)\bigr)u$,
since $C_b(pI-A_b)^{-1}B_b=\Hb(p)$ for the companion realization.
For $s\neq0$, the Schur complement of the block $sI$ in $sI-A_\omega$ gives
\begin{align*}
\det(sI-A_\omega)&=\det(sI)\,\det\Bigl(sI-B_wA_b+\frac{\omega^2}{s}I\Bigr)\\&=(B_ws)^5\det\bigl(p_\omega(s)I-A_b\bigr)\\
&=(B_ws)^5D_b\bigl(p_\omega(s)\bigr)=D_{\omega}(s),
\end{align*}
since the characteristic polynomial of the companion matrix $A_b$ is $D_b$, and
the last equality is \eqref{eq:NiDi}.
Both sides are polynomials in $s$ that agree for $s\neq0$, hence for all $s$.
\end{proof}
\todo[inline,author=Iman,color=cyan]{This paragraph is hard to parse for someone who does not know the proof approach. E.g., what separation? What disjoint stability intervals? etc. Do we need it here?}
\todo[inline, author=Gal]{Nope. I'm moving a lot of exposition and explanations to later on now.}

\begin{remark}
    Parts of Proposition \ref{prop:bpmap} can be shown via different control theoretic arguments. In particular, it can be shown that under the frequency transformation the frequency response of $\Hb^\omega$ traverses the full Nyquist locus of $\Hb$ twice with the same orientation. Strict properness gives the common limiting value zero at the ends of these traversals. Due to the doubling of the poles, this implies the stability regions of $\Hb^\omega$ and $\Hb$ are the same. Item \ref{propitem:bpmap3} can be deduced using the same arguments, since it is equivalent to the shortest Euclidean distance from the Nyquist plot and the critical point. \Endofremark
\end{remark}
\subsection{The plant family}\label{sec:plant}
%
\begin{figure}[!hbt]
\centering
 \begin{tikzpicture}[>=Latex, font=\scriptsize,
          blk/.style={draw, rounded corners=2pt, minimum width=1.1cm,
            minimum height=0.5cm, fill=MidnightBlue!12},
          gain/.style={draw, fill=orange!15, minimum width=0.6cm,
            minimum height=0.45cm},
          sum/.style={draw, circle, inner sep=1pt, minimum size=0.35cm}]
        \coordinate (u) at (-1.6,0.4);
        \foreach \i/\yy in {1/1.6, 2/0.4, N/-1.2}{
          \node[blk] (H\i) at (0,\yy) {$\Hb^{\omega_{\i}}$};
          \node[sum] (s\i) at (1.6,\yy) {$-$};
          \node[gain] (k\i) at (3.0,\yy) {$\ki_{\i}$};
          \draw[->] (H\i) -- (s\i);
          \draw[->] (s\i) -- (k\i) node[midway,above]{$y_{\i}$};
          \draw[->] (u) |- (H\i);
        }
        \node at (0,-0.4) {$\vdots$};
        \node[blk, fill=orange!20] (Hs) at (0,-2.4) {$\Hsum$};
        \draw[->] (u) |- (Hs);
        \draw[->, orange!70!black] (Hs.east) -| node[pos=0.75,
            right]{$a_{N}$} (sN.south);
        \draw[->, orange!70!black] (Hs.east) --++ (+.6,0) --++ (0,1.9) -| node[pos=0.75,
            right]{$a_{2}$} (s2.south);
        \draw[->, orange!70!black] (Hs.east) --++ (.3,0) --++ (0,3.4) -| node[pos=0.75,
            right]{$a_{1}$} (s1.south);
        \draw (k1.east) -- ++(0.5,0) |- (2.5,-3.0) -| (u);
        \draw[->] (k2.east) -- ++(0.5,0);
        \draw[->] (kN.east) -- ++(0.5,0);
        \node[font=\scriptsize] at (2.3,-3.25) {$u=\sum_i \ki_i y_i$};
      \end{tikzpicture}\par
\caption{The plant structure: a parallel interconnection of $N$ band-pass copies of $\Hb$ each centered at a different $\omega_i$, plus $a_i$ times $\Hsum$. }
\label{fig:plantstruct}
\end{figure}
Using the scalar building blocks introduced above, we construct single-input plants, $m=1$, with $\ell=N$ outputs. For such plants the gain is a row vector $\K=(\ki_1,\dots,\ki_N)\in\mathbb{R}^{1\times N}$, minimizing the coupling between components without constraining the gain. The plant transfer function is chosen as
\begin{equation}\label{eq:plantStruct}
G(s)=C(sI-A)^{-1}B=-\begin{bmatrix}\Hb^1(s)+a_1\Hsum(s)\\ \vdots\\ \Hb^N(s)+a_N\Hsum(s)\end{bmatrix},
\end{equation}
where $\Hsum=\Nsum/\Dsum$ is the summing block of Proposition~\ref{prop:hsum} and $\Hb^i=N_i/D_i$ is the band-pass copy of the prototype of Section~\ref{sec:bp} centered at frequency $\omega_i$. The center frequencies $0<\omega_1<\dots<\omega_N$ will be constrained in
Section~\ref{sec:equiv} and fixed in Definition~\ref{def:schedule}. Recall the determinant identity
\begin{equation}\label{eq:sylvester}
\det\bigl(sI-(A+B\K C)\bigr)=\det(sI-A)\,\bigl(1-\K C(sI-A)^{-1}B\bigr),
\end{equation}
which holds because $\det(I-uv^{\mathsf T})=1-v^{\mathsf T}u$ for column vectors $u,v$. Hence, for \eqref{eq:plantStruct} we have
\begin{equation}\label{eq:chitrue}
\begin{gathered}
\begin{aligned}
  \chi(s,\K) &\coloneqq 1-\K G(s)\\&=1+\W\Hsum(s)+\sum_{i=1}^{N}\ki_i\Hb^i(s) ,  
\end{aligned} 
\end{gathered}
\end{equation}
where $\W\coloneqq \sum_{i=1}^N a_i\ki_i$. Let $D_i$ and $\Dsum$ be the denominators of $\Hb^i$ and $\Hsum$ which are monic of degrees $10$ and $3$, respectively. Define $D_{\mathrm{tot}}\coloneqq \Dsum\prod_{j=1}^ND_j$, and the polynomial
\begin{equation}\label{eq:Ptrue}
\begin{aligned}
\Ptrue(s,\K)&\coloneqq D_{\mathrm{tot}}\,\chi(s,\K)\\
&=\Dsum\prod_jD_j+\W\Nsum\prod_jD_j\\
&\quad+\sum_{i=1}^N\ki_iN_i\Dsum\prod_{j\ne i}D_j,
\end{aligned}
\end{equation}
which is monic of degree $d\coloneqq 10N+3$.  Lemma~\ref{lem:realization} realizes \eqref{eq:plantStruct} in state space and identifies $\Ptrue$ as the closed-loop characteristic polynomial with gain $\K$. All stability statements below are made for polynomials.
\begin{lemma}[Realization]\label{lem:realization}
Let $(A_b,B_b,C_b)$ be the companion realization of $\Hb$ in controllable
canonical form, and let $(A_s,B_s,C_s)$ be the companion realization of $\Hsum$
in the same form.

Let $(A_i,B_i,C_i)\coloneqq (A_{\omega_i},B_{\omega_i},C_{\omega_i})$ be the realization
of $\Hb^i$ of dimension $10$ obtained from $(A_b,B_b,C_b)$ in
Proposition~\ref{prop:bpmap}\ref{propitem:bpmap4}.
Define $A\coloneqq \diag(A_1,\dots,A_N,A_s)$, let $B$ stack the input matrices in the
same order, and let row $i$ of $C$ contain $-C_i$ in the block of $A_i$ and
$-a_iC_s$ in the block of $A_s$, all other entries zero.
Then $n=10N+3$, the transfer function is \eqref{eq:plantStruct}, and for every
$\K\in\mathbb{R}^{1\times N}$
\[
\det\bigl(sI-(A+B\K C)\bigr)=\Ptrue(s,\K).
\]
Consequently $A+B\K C$ is Hurwitz stable if and only if $\Ptrue(s,\K)$ is
Hurwitz. \GB{If $\eps$, $\delta$, $B_w$, and the $\omega_i$ are rational, then the entries of $(A,B,C)$ are rational, with bit length polynomial in the bit lengths of these parameters, $\Wstar$, and the $a_i$.}
\end{lemma}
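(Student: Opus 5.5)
The argument is bookkeeping on the block-diagonal realization; there are four items to check.

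\emph{Dimension and transfer function.} The dimension is $n=10N+3$ because each $A_i$ is $10\times10$ by Proposition~\ref{prop:bpmap}\ref{propitem:bpmap4} and $A_s$ is $3\times3$, since $\Dsum$ is a monic cubic and $\Hsum$ is strictly proper. Because $A$ is block diagonal and $B$ stacks $B_1,\dots,B_N,B_s$, the resolvent $(sI-A)^{-1}B$ is the stack of $(sI-A_i)^{-1}B_i$ and $(sI-A_s)^{-1}B_s$. Row $i$ of $C$ picks out only the $i$-th and the summing blocks. So the $i$-th entry of $C(sI-A)^{-1}B$ is
\[
-C_i(sI-A_i)^{-1}B_i-a_iC_s(sI-A_s)^{-1}B_s=-\Hb^i(s)-a_i\Hsum(s).
\]
This uses Proposition~\ref{prop:bpmap}\ref{propitem:bpmap4} for the band-pass blocks and the standard companion-form identity for $\Hsum$. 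It gives \eqref{eq:plantStruct}.

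\emph{Characteristic polynomial.} By block diagonality, $\det(sI-A)=\Dsum\prod_jD_j=D_{\mathrm{tot}}$, using $\det(sI-A_i)=D_i$ from Proposition~\ref{prop:bpmap}\ref{propitem:bpmap4} and $\det(sI-A_s)=\Dsum$ for the companion matrix. Apply \eqref{eq:sylvester} with $u=B$ and $v^{\mathsf T}=\K C(sI-A)^{-1}$; the identity $\det(I-uv^{\mathsf T})=1-v^{\mathsf T}u$ holds since $m=1$. This yields
\[
\det\bigl(sI-(A+B\K C)\bigr)=D_{\mathrm{tot}}(s)\bigl(1-\K G(s)\bigr)=D_{\mathrm{tot}}(s)\chi(s,\K)=\Ptrue(s,\K)
\]
for every $s$ that is not an eigenvalue of $A$. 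Both sides are polynomials, so they agree identically. Since the eigenvalues of $A+B\K C$ are the roots of this determinant, Hurwitz stability of the matrix is equivalent to $\Ptrue(\cdot,\K)$ being Hurwitz.

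\emph{Bit length.} This is the only step needing care, and it is routine.
\begin{itemize}
\item The coefficients $d_j,n_j$ of $\Hb$ are rational with bit length polynomial in that of $\eps$, as stated before Proposition~\ref{prop:Hb} (Appendix~\ref{app:Hb}). Hence $A_b,B_b,C_b$ have such entries.
\item $A_i$ consists of $0$, $\pm\omega_i$ and $B_wA_b$, and $C_i=[0\ \ B_wC_b]$. Each entry is a single product of rationals, so its bit length is at most a sum of bit lengths.
\item $A_s$ contains $-q_0=\delta^2-(\Wstar)^2$ and $-2\Wstar$, and $C_s$ contains the coefficients $0,-1,1$ of $\Nsum$. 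The entries $a_iC_s$ are then polynomial in the bit lengths of $a_i$, $\Wstar$ and $\delta$.
\end{itemize}
The matrices have $O(N^2)$ entries, so the total encoding is polynomial.
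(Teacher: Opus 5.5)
Your proposal is correct and follows essentially the same route as the paper: you evaluate the transfer function and $\det(sI-A)=D_{\mathrm{tot}}$ blockwise, apply the rank-one identity \eqref{eq:sylvester}, extend the resulting identity to all $s$ because both sides are polynomials, and read the bit-length claim directly off the entries of $(A,B,C)$. The only minor difference is that you justify the extension by agreement away from the finitely many eigenvalues of $A$, while the paper cites equality as rational functions; the two justifications are equivalent.
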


\begin{proof}
The transfer function of the block-diagonal realization is
\eqref{eq:plantStruct} by construction, since $C_i(sI-A_i)^{-1}B_i=\Hb^i$ by
Proposition~\ref{prop:bpmap}\ref{propitem:bpmap4} and $C_s(sI-A_s)^{-1}B_s=\Hsum$.
The characteristic polynomial of $A$ is the product of the block characteristic
polynomials.
These are $\det(sI-A_i)=D_i$ by Proposition~\ref{prop:bpmap}\ref{propitem:bpmap4} and
$\det(sI-A_s)=\Dsum$ for the companion block, so $\det(sI-A)=D_{\mathrm{tot}}$.
Identity \eqref{eq:sylvester} then gives
$\det(sI-A-B\K C)=D_{\mathrm{tot}}\chi(s,\K)=\Ptrue(s,\K)$ as rational
functions.
Both sides are monic polynomials of degree $d$, so they are equal as
polynomials.
The stability equivalence follows.

Bit length: the entries of $(A_b,B_b,C_b)$ and $(A_s,B_s,C_s)$ are the
coefficients of $N_b,D_b$ and of $\Nsum,\Dsum$, which are rationals whose bit
lengths are polynomial in the bit lengths of $\eps$, $\delta$, and $\Wstar$.
The construction of Proposition~\ref{prop:bpmap}\ref{propitem:bpmap4} adds the entries of
$\omega_iI$ and multiplications by $B_w$, and the rows of $C$ multiply $C_s$ by
the integers $a_i$.
\end{proof}

\section{Comparison, frequency separation and local root counts}\label{sec:equiv}
The stability of \eqref{eq:plantStruct} does not seem to immediately encode a solution to CSSP. However, the various blocks \emph{independently} do. To see this, consider the polynomial 
\begin{equation}\label{eq:Pnom}
     \Pnom(s,\K)\coloneqq \bigl(\Dsum+\W\Nsum\bigr)\prod_{i=1}^N\bigl(D_i+\ki_i N_i\bigr)
\end{equation}
which is also monic of degree $d\coloneqq 10N+3$, and is the product of the closed-loop characteristic polynomials of the $N+1$ decoupled scalar loops. By Propositions~\ref{prop:Hb}\ref{propitem:hurwitz2}, \ref{prop:bpmap}\ref{propitem:bpmap2}, and~\ref{prop:hsum}, the polynomial $\Pnom(\cdot,\K)$ is Hurwitz if and only if $\K$ satisfies the two constraints of Definition~\ref{def:cssp} for the instance $(a,\Wstar,\eps,\delta)$. With $\Wstar=S+\sum_i a_i$ and $\delta+\eps\sum_i a_i<1$, Lemma~\ref{lem:cssp} turns any such $\K$ into a solution $x\in\{0,1\}^N$ of the SSP instance $(a,S)$, and conversely every solution $x$ gives the gain $\hat\K=\mathbf{1}+x$ for which $\Pnom$ is Hurwitz.

In this section, we use those individual loops as a \emph{comparison system} with $\Ptrue$, relating the right half-plane roots of $\Pnom$ with those of $\Ptrue$ in different regions. Fix the blocks of Section~\ref{sec:blocks} and write
\[
\begin{gathered}
L_0\coloneqq\W\Hsum,\qquad
L_i\coloneqq\ki_i\Hb^i,\quad 1\le i\le N,\\
\end{gathered}
\]
Then the two characteristic polynomials satisfy
\[
\frac{\Ptrue}{D_{\mathrm{tot}}}=1+\sum_{i=0}^N L_i=\chi,
\qquad
\frac{\Pnom}{D_{\mathrm{tot}}}=\prod_{i=0}^N(1+L_i).
\]
A natural approach would be to compare these polynomials directly on a contour enclosing their right half-plane roots, in a similar fashion to the Nyquist criterion. Frequency separation makes the other loop terms small near each selected band, but this alone does not give the \emph{relative} bound required by Rouch\'e's theorem. To see the difficulty, consider the case $N=1$, for which $\chi-(1+L_0)(1+L_1)=-L_0L_1.$ At a frequency where $L_1=-1$, the nominal product vanishes, while the coupled return difference equals $L_0$. Thus, even an arbitrarily small nonzero tail $L_0$ prevents strict domination by the nominal product at that point. 
\begin{figure}[!hbt]
\centering
 \begin{tikzpicture}[>=Latex, font=\scriptsize, xscale=01.05, yscale=1.35]
        \draw[->] (0,0) -- (6,0) node[right]{$\absv{s}$ (log)};
        \draw[->] (0,0) -- (0,1.7) node[above]{$\absv{L_j(\J\nu)}$};
        \draw[orange!80!black, thick, domain=0.1:5.6, samples=60]
          plot (\x, {0.9/(1+1.6*\x)});
        \node[orange!70!black] at (0.9,0.85) {$L_0=W\Hsum$};
        \foreach \c/\lab in {1.7/{\omega_1}, 3.1/{\omega_2}, 4.5/{\omega_N}}{
          \draw[MidnightBlue, thick, domain={\c-1.2}:{\c+1.2}, samples=50]
            plot (\x, {1.2*exp(-((\x-\c)/0.28)^2)});
          \draw[dashed] (\c,0) -- (\c,-0.12) node[below]{$\lab$};
        }
        \node[MidnightBlue] at (3.1,1.45) {band terms $L_i=\ki_i\Hb^{i}$};
        \fill[red!25] (2.25,0) rectangle (2.55,0.25);
        \node[red!80!black, align=center] at (2.4,0.5) {overlap\\ of tails};
        \node[align=center] at (4.3,-0.55) {summing tail decays like
            $1/\absv{s}$,\\ band tails like $1/\absv{\omega_j-\nu}^2$};
      \end{tikzpicture}\par
\caption{Schematic overlap of loop magnitudes away from their centers. The curves are illustrative, not frequency responses of the constructed blocks; the rigorous off-band estimates are given in Lemma~\ref{lem:tail}.}
\label{fig:tailoverlap}
\end{figure}
Near a crossing, a direct comparison likewise requires a quantitative bound relative to the small scalar return difference. Such a comparison may be possible when every scalar loop has sufficient margin, as will be the case for the gains constructed from an SSP solution. In the reverse direction, however, an arbitrary stabilizing gain may place some scalar factors at or near their crossing values. We therefore need a comparison that can determine the root count associated with one factor without requiring positive margins for all the others.

To this end, we divide the closed right half-plane into regions and compare the plant with \emph{one} scalar loop at a time. For a selected loop $i$, the perturbation of its return difference is precisely $\chi-(1+L_i)=\sum_{j\ne i}L_j$. We choose a region containing the selected factor's right half-plane roots and none of the other block denominators' roots, with a boundary on which the other loop terms are small. If this perturbation is smaller than the selected return difference on the boundary, Rouch\'e's theorem preserves the enclosed root count. Applying this argument separately in each region provides a rigorous way to use the familiar control-theoretic notion of frequency separation: each scalar loop determines the local root count when its margin is sufficient to dominate the interaction with the other loops. The comparison is made between $\Ptrue$ and $(1+L_i)D_{\mathrm{tot}}$, retaining all denominators so that the argument concerns polynomials and includes all internal modes. 

Throughout this section, assume $\max_i\abs{\ki_i}\le\kappa$ and $\abs\W\le w$, where $\kappa,w\ge1$. These bounds are \emph{hypotheses} of the comparison; suitable bounds for every stabilizing gain will be established in Section~\ref{sec:reduction}. Choosing the regions requires both root-location bounds and estimates on the loop terms away from their assigned frequencies. For the former, set
\begin{subequations}\label{eq:regionbounds}
    \begin{equation}\label{eq:regionbounds:roots}
\begin{aligned}
Q(\kappa)&\coloneqq1+S_d+\kappa(1+S_n),\\
R_W(w)&\coloneqq1+w+2\Wstar+(\Wstar)^2,
\end{aligned}
\end{equation}
which bound the moduli of the roots of $D_b+\ki N_b$ and $\Dsum+\W\Nsum$, respectively, under the stated gain bounds. Here $S_d,S_n$
are defined in Proposition~\ref{prop:Hb}\ref{propitem:coeff}. For the decay
estimates, set
 \begin{equation}\label{eq:regionbounds:decay}
R\coloneqq\max\{S_n,2S_d\}, \qquad R_H\coloneqq2\bigl(1+2\Wstar+(\Wstar)^2\bigr),
    \end{equation}
\end{subequations}
and Lemma~\ref{lem:decay} gives
\[
\begin{aligned}
\abs{\Hb(p)}&\le\frac4{\abs p^2}, &&\abs p\ge R,\\
\abs{\Hsum(s)}&\le\frac4{\abs s}, &&\abs s\ge R_H.
\end{aligned}
\]
For a band-pass copy, moving sufficiently far below or above its band in modulus makes $\abs{p_{\omega_i}(s)}$ large, so the prototype's decay controls the transformed block there. The following assumption ensures that these root and decay bounds apply on the comparison regions.
\begin{assumption}\label{ass:DC}
For given $\kappa\ge1$ and $w\ge1$, the center frequencies satisfy
\begin{enumerate}[label=(A\arabic*)]
\item\label{assitem:A1}
$\omega_{i+1}\ge4\omega_i$ for $1\le i\le N-1$;
\item\label{assitem:A2}
$\omega_1\ge4\max\{B_wQ(\kappa),R_W(w),R_H,B_wR\}$.
\end{enumerate}
\end{assumption}
The first condition separates consecutive bands, while the second places the first band above the summing dynamics and makes each band sufficiently narrow relative to its center frequency for the required localization. In particular, Lemma~\ref{lem:loc} places all roots of $\Dsum$ and $\Dsum+\W\Nsum$ in the disk $\abs s\le\omega_1/4$, and all roots of $D_i$ and $D_i+\ki_iN_i$ in the annulus
$2\omega_i/3\le\abs s\le3\omega_i/2$. We therefore define the larger comparison regions
\begin{equation}\label{eq:regions}
\begin{aligned}
\mathcal R_0&\coloneqq\{s\in\CRHP:\abs s\le\omega_1/2\},\\
\mathcal R_i&\coloneqq\{s\in\CRHP:\omega_i/2\le\abs s\le2\omega_i\},
\end{aligned}
\end{equation}
for $1\le i\le N$, and let
\[
\mathcal{G}\coloneqq\CRHP\setminus\bigcup_{i=0}^N\mathcal R_i.
\]
Each region contains all right half-plane roots of its corresponding scalar factor and none of the other block denominators, with a strict radial separation between the root-location sets and the circular boundaries. The regions have pairwise disjoint interiors, although adjacent regions may share a circular arc; see Fig.~\ref{fig:regions:1}.
\begin{figure}[!hbt]
\centering
\begin{tikzpicture}[x=1cm,y=1cm,>=Latex,font=\small]
\definecolor{regionblue}{RGB}{44,96,145}
\definecolor{regionorange}{RGB}{180,105,24}
\fill[black!3] (0,0) rectangle (4.55,4.55);
\fill[regionblue!8] (0,0)--(4.2,0) arc[start angle=0,end angle=90,radius=4.2]--cycle;
\fill[regionorange!12] (0,0)--(.72,0) arc[start angle=0,end angle=90,radius=.72]--cycle;
\fill[regionblue!30] (2.65,0)--(3.65,0) arc[start angle=0,end angle=90,radius=3.65]--(0,2.65) arc[start angle=90,end angle=0,radius=2.65]--cycle;
\fill[regionblue!30] (1.02,0)--(1.72,0) arc[start angle=0,end angle=90,radius=1.72]--(0,1.02) arc[start angle=90,end angle=0,radius=1.02]--cycle;
\fill[regionorange!45] (0,0)--(.36,0) arc[start angle=0,end angle=90,radius=.36]--cycle;
\foreach \r in {.72,2.18,4.2}{\draw[black!65] (\r,0) arc[start angle=0,end angle=90,radius=\r];}
\draw[->] (0,0)--(4.8,0) node[below] {$\operatorname{Re}s$};
\draw[->] (0,0)--(0,4.85) node[above] {$\operatorname{Im}s$};
\foreach \r/\lab in {.72/{\omega_1/2},2.18/{2\omega_1=\omega_2/2},4.2/{2\omega_2}}{
 \draw (-.05,\r)--(.05,\r);\node[left=3pt] at (0,\r) {$\lab$};}
\node[regionorange!80!black] at (.45,.27) {$\mathcal R_0$};
\node[regionblue!80!black] at (1.40,.48) {$\mathcal R_1$};
\node[regionblue!80!black] at (3.18,.62) {$\mathcal R_2$};
\node at (3.75,3.8) {$\mathcal G$};
\end{tikzpicture}
\par
\caption{Comparison regions for $N=2$ and $\omega_2=4\omega_1$.
Darker sets contain the right-half-plane roots of the block denominators and scalar closed-loop factors: orange for the summing block and blue for the band blocks. The lower half follows by reflection across the real axis; radii are not to scale.}
\label{fig:regions:1}
\end{figure}
To control the interaction on these regions, define the common bound
\begin{equation}\label{eq:T}
    T\coloneqq \frac{8\abs\W}{\omega_1} +\frac{2(N+1)\kappa B_w^2}{\omega_1^2}.
\end{equation}
The first term bounds the summing-loop contribution outside its region, while the second bounds the combined off-band contributions of the band loops. Under Assumption~\ref{ass:DC}, Lemma~\ref{lem:tail} shows that $T$ bounds the sum of the other loop magnitudes on each $\partial\mathcal R_i$, the selected loop magnitude on its circular arcs, and the sum of all loop magnitudes in $\mathcal G$. Thus increasing the first frequency, while maintaining the separation, reduces all three bounds uniformly over the prescribed gain set.

The selected return difference is controlled differently on the two parts of the boundary. On the imaginary-axis segments its modulus is bounded below by the scalar modulus margin of Definition~\ref{def:margin}, which the band-pass transformation preserves. On the circular arcs, decay gives $\abs{1+L_i}\ge1-T$; see Fig.~\ref{fig:regions:2}. Consequently, a selected margin greater than $T$, together with $T\le1/4$, provides the domination needed for the local comparison. In $\mathcal G$, all loop terms are small, which will exclude additional right-half-plane roots. The following theorem states the resulting local counts and their simultaneous consequence; the supporting estimates are proved in Appendix~\ref{app:equiv}.
\begin{figure}[!hbt]
\centering
\begin{tikzpicture}[x=1cm,y=1cm,>=Latex,font=\small]
\definecolor{marginblue}{RGB}{37,91,148}
\definecolor{decayorange}{RGB}{173,87,17}
\fill[black!4] (0,-2)--(0,-.5) arc[start angle=-90,end angle=90,radius=.5]--(0,2) arc[start angle=90,end angle=-90,radius=2]--cycle;
\draw[->,black!40] (0,-2.3)--(0,2.4) node[above,black] {$\operatorname{Im}s$};
\draw[->,black!40] (-.16,0)--(2.45,0) node[right,black] {$\operatorname{Re}s$};
\draw[marginblue,line width=1.4pt] (0,.5)--(0,2);
\draw[marginblue,line width=1.4pt] (0,-2)--(0,-.5);
\draw[decayorange,line width=1.2pt,dashed] (0,-2) arc[start angle=-90,end angle=90,radius=2];
\draw[decayorange,line width=1.2pt,dashed] (0,-.5) arc[start angle=-90,end angle=90,radius=.5];
\node at (1.35,.30) {$\mathcal R_i$};
\node[left=3pt] at (0,2) {$2\omega_i$};
\node[left=3pt] at (0,.5) {$\omega_i/2$};
\node[left=3pt] at (0,-.5) {$-\omega_i/2$};
\node[left=3pt] at (0,-2) {$-2\omega_i$};
\node[anchor=west,align=left,text=marginblue] at (2.55,1.65)
 {Axis segments\\$|1+L_i|\ge m_b(\ki_i)$};
\draw[marginblue,->] (2.48,1.65)--(.07,1.45);
\node[anchor=west,align=left,text=decayorange] at (2.55,-1.05)
 {Circular arcs\\$|L_i|\le T$\\$|1+L_i|\ge1-T$};
\draw[decayorange,->] (2.48,-1.12)--(1.67,-1.08);
\end{tikzpicture}\par
\caption{Boundary estimates for a band region $\mathcal R_i$, $i\ge1$. The modulus margin controls the axis segments, while decay controls the circular arcs. On the entire boundary, the sum of the other loop magnitudes is at most $T$.}
\label{fig:regions:2}
\end{figure}
\begin{theorem}[Region-by-region root counts]\label{thm:localized}
Let Assumption~\ref{ass:DC} hold with $\kappa\ge\max_i\abs{\ki_i}$ and $w\ge\abs\W$, and assume $T\le1/4$. Set
\[
\begin{gathered}
Q_0\coloneqq\Dsum+\W\Nsum,\qquad
Q_i\coloneqq D_i+\ki_iN_i,\\
m_0\coloneqq m_s(\W),\qquad
m_i\coloneqq m_b(\ki_i),\quad 1\le i\le N.
\end{gathered}
\]
Then:
\begin{enumerate}[label=(\alph*)]
\item\label{thmitem:loc1}
$\Ptrue$ has no roots in $\mathcal G$.
\item\label{thmitem:loc2}
For every $i\in\{0,\dots,N\}$ with $m_i>T$,
\[
\Z(\Ptrue;\mathcal R_i)=\Z(Q_i;\CRHP).
\]
\item\label{thmitem:loc3} If $m_i>T$ for all $i$, then
\[
\Z(\Ptrue;\CRHP)=\Z(\Pnom;\CRHP).
\]
In particular, $\Ptrue$ is Hurwitz if and only if all $N+1$ factors of $\Pnom$ are Hurwitz.
\end{enumerate}
\end{theorem}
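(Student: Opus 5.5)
We plan to prove all three parts by applying Rouch\'e's theorem to polynomials, taking as given the supporting estimates promised earlier: Lemma~\ref{lem:loc}, which locates the roots of each $Q_i$ and of each block denominator, and Lemma~\ref{lem:tail}, which states the bounds by $T$ on $\partial\mathcal R_i$ and in $\mathcal G$. Write $D_{\mathrm{tot}}(1+L_i)=Q_i\prod_{j\ne i}D_j$ for $i\ge1$, with $D_0:=\Dsum$ when $i=0$. Then $D_{\mathrm{tot}}(1+L_0)=Q_0\prod_jD_j$. The difference between $\Ptrue$ and this comparison polynomial is $D_{\mathrm{tot}}\sum_{j\ne i}L_j$.

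For part \ref{thmitem:loc1}, take $s\in\mathcal G$. Lemma~\ref{lem:tail} gives $\sum_j\abs{L_j(s)}\le T\le1/4$, so $\abs{\chi(s)}\ge 3/4$. The set $\mathcal G$ contains no roots of $D_{\mathrm{tot}}$: by Lemma~\ref{lem:loc}, every right half-plane root of $\Dsum$ or $D_j$ lies in the interior of some $\mathcal R_i$. Hence $\Ptrue(s)=D_{\mathrm{tot}}(s)\chi(s)\ne0$. One subtlety must be handled here. The unbounded part of $\mathcal G$, namely $\abs s>2\omega_N$, also needs the tail bound. We will check that Lemma~\ref{lem:tail} covers it, since the summing term decays like $1/\abs s$ and the band terms like $\abs{p_{\omega_j}(s)}^{-2}$.

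For part \ref{thmitem:loc2}, fix $i$ with $m_i>T$. We take $\mathcal R_i$ as a closed contour region: a half-disk for $i=0$ and a half-annulus for $i\ge1$. Its boundary consists of imaginary-axis segments and circular arcs. On the boundary we establish
\[
\abs{\Ptrue-D_{\mathrm{tot}}(1+L_i)}=\abs{D_{\mathrm{tot}}}\Bigl|\sum_{j\ne i}L_j\Bigr|\le T\abs{D_{\mathrm{tot}}}<\abs{D_{\mathrm{tot}}}\,\abs{1+L_i}.
\]
On the axis segments, $\abs{1+L_i}\ge m_i>T$. For $i\ge1$ this is Proposition~\ref{prop:bpmap}\ref{propitem:bpmap3}; for $i=0$ it is the definition of $m_s$. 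On the arcs, $\abs{1+L_i}\ge1-T\ge3/4>T$. $D_{\mathrm{tot}}$ has no zeros on $\partial\mathcal R_i$: its axis zeros are excluded by Proposition~\ref{prop:Hb}\ref{propitem:nothurwitz}, Proposition~\ref{prop:bpmap}\ref{propitem:bpmap2} and Proposition~\ref{prop:hsum}, and its arc zeros by the strict radial separation from Lemma~\ref{lem:loc}. So the strict inequality is legitimate, and Rouch\'e's theorem, in its symmetric form, gives equal zero counts inside $\mathcal R_i$. The boundary points themselves are not zeros of either polynomial, so counts in the closed region coincide with counts in its interior.

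The zeros of $D_{\mathrm{tot}}(1+L_i)$ in $\mathcal R_i$ are those of $Q_i$ together with those of $D_j$ for $j\ne i$. By Lemma~\ref{lem:loc}, every right half-plane root of $Q_i$ lies in $\mathcal R_i$, and no root of $D_j$ ($j\ne i$) does. This gives $\Z(\Ptrue;\mathcal R_i)=\Z(Q_i;\CRHP)$.

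The main obstacle is making sure that $Q_i$ and $\Ptrue$ have no zeros on the imaginary-axis part of the boundary. This follows because $m_i>0$ and the strict inequality holds, so it is automatic. What remains delicate is the exact region bookkeeping. For $i=0$, the region $\mathcal R_0$ contains the origin as a boundary point, where $Q_0(0)=q_0\ne0$ and $D_j(0)=\omega_j^{10}\ne0$. For $i\ge1$, the arcs at $\omega_i/2$ and $2\omega_i$ must avoid all roots, which the annulus $[2\omega_i/3,3\omega_i/2]$ and the disk $\abs s\le\omega_1/4$ guarantee.

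For part \ref{thmitem:loc3}, the regions $\mathcal R_0,\dots,\mathcal R_N$ together with $\mathcal G$ cover $\CRHP$. They may share circular arcs, but part \ref{thmitem:loc2} shows $\Ptrue$ has no zeros on any $\partial\mathcal R_i$, so there is no double counting. Summing the counts gives $\Z(\Ptrue;\CRHP)=\sum_i\Z(Q_i;\CRHP)=\Z(\Pnom;\CRHP)$. The Hurwitz equivalence follows immediately, since each $Q_i$ has nonnegative count.
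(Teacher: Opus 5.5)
Your proposal is correct and follows the paper's proof essentially step for step: for (a), nonvanishing of $\chi$ and $D_{\mathrm{tot}}$ on $\mathcal G$; for (b), Rouch\'e on $\partial\mathcal R_i$ comparing $\Ptrue$ with $(1+L_i)D_{\mathrm{tot}}$, with the margin on the axis segments and $1-T>T$ on the arcs; and for (c), summing the local counts once boundary roots are excluded. The unbounded part of $\mathcal G$ that you flag is already handled by Lemma~\ref{lem:tail}\ref{lemitem:tail3}, because there $\abs s\ge2\omega_j$ for every $j$.
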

\begin{proof}
By Lemma~\ref{lem:loc}, every root of $Q_i$ in $\CRHP$ belongs to $\mathcal R_i$, and the other block denominators have no roots in that region. The same localization places all right-half-plane denominator roots away from the circular boundaries. Moreover, $\Dsum$ and the $D_i$ have no imaginary-axis roots by Proposition~\ref{prop:hsum}, Proposition~\ref{prop:Hb}\ref{propitem:nothurwitz}, and the band-pass transformation. Thus $D_{\mathrm{tot}}$ is nonzero on $\mathcal G$ and on every region boundary.

We first show how the tail estimates apply on these sets. By Lemma~\ref{lem:tail}\ref{lemitem:tail1} and \ref{lemitem:tail2},
\[
\begin{aligned}
\abs{L_0(s)}&\le\frac{8\abs\W}{\omega_1},
&&\abs s\ge\omega_1/2,\\
\abs{L_j(s)}&\le\frac{16\kappa B_w^2}{9\omega_j^2}
\le\frac{2\kappa B_w^2}{\omega_1^2},
&&\abs s\le\omega_j/2\text{ or }\abs s\ge2\omega_j.
\end{aligned}
\]
On a band region's boundary, condition~\ref{assitem:A1} places every other band term outside its own band, and the summing-loop estimate also applies. On $\partial\mathcal R_0$, only the band terms contribute to the perturbation, and all satisfy their off-band estimate. The selected loop itself satisfies its decay estimate on the circular arcs, while every loop satisfies the corresponding estimate in $\mathcal G$. Summing the relevant bounds gives the three conclusions of Lemma~\ref{lem:tail}\ref{lemitem:tail3}--\ref{lemitem:tail5}, each bounded by $T$. In particular, on $\mathcal G$,
\[
\abs{\chi-1}\le\sum_{j=0}^N\abs{L_j}\le T\le\frac14.
\]
Since $D_{\mathrm{tot}}$ is nonzero there, $\Ptrue=\chi D_{\mathrm{tot}}$ has no roots in $\mathcal G$, proving \ref{thmitem:loc1}.

For \ref{thmitem:loc2}, fix $i$ with $m_i>T$ and set
\[
P^{(i)}\coloneqq(1+L_i)D_{\mathrm{tot}}.
\]
This is the polynomial $Q_i$ multiplied by all other block denominators. On $\partial\mathcal R_i$, the perturbation bound gives
\[
\abs{\Ptrue-P^{(i)}} =\abs{D_{\mathrm{tot}}}\left|\sum_{j\ne i}L_j\right| \leq\abs{D_{\mathrm{tot}}}\,T.
\]
On the imaginary-axis segments, $\abs{1+L_i}\ge m_i>T$, using margin preservation under the band-pass transformation when $i\ge1$. On the circular arcs,
\[
\abs{1+L_i}\ge1-\abs{L_i}\ge1-T>T.
\]
Since $D_{\mathrm{tot}}$ is nonzero on the boundary, these inequalities give
\[
\abs{\Ptrue-P^{(i)}}<\abs{P^{(i)}} \qquad\text{on }\partial\mathcal R_i.
\]
Rouch\'e's theorem~\cite[Thm.~10.43(b)]{R:87} therefore yields
\[
\Z(\Ptrue;\mathcal R_i) =\Z(P^{(i)};\mathcal R_i) =\Z(Q_i;\CRHP).
\]
The last equality follows from localization: the other denominators contribute no roots in $\mathcal R_i$, and the region contains every root of $Q_i$ in $\CRHP$. The strict boundary inequality also excludes boundary roots, so the count on the closed region agrees with the interior count.

If every $m_i>T$, the local comparison applies to all regions. Together with \ref{thmitem:loc1}, it excludes roots in $\mathcal G$ and on every region boundary, so shared boundaries cause no double counting. Summing the local counts gives
\[
\Z(\Ptrue;\CRHP) =\sum_{i=0}^N\Z(Q_i;\CRHP) =\Z(\Pnom;\CRHP),
\]
which proves \ref{thmitem:loc3}.
\end{proof}

Part~\ref{thmitem:loc2} will be used in the reverse direction of the reduction, where it constrains a selected gain even if other gains lie near crossing values. Part~\ref{thmitem:loc3} will be used for the gains constructed from an SSP solution, whose scalar margins all exceed the coupling bound
\section{The reduction}\label{sec:reduction}
Theorem~\ref{thm:localized} provides the root-count comparison needed to relate stability of the coupled plant to the scalar constraints encoded by $\Pnom$. To apply it, we need bounds $\max_i\abs{\ki_i}\le\kappa$ and $\abs\W\le w$, and center frequencies satisfying Assumption~\ref{ass:DC} for the corresponding root and decay bounds $Q(\kappa)$, $R_W(w)$, $R$, and $R_H$. We must also ensure that the coupling bound $T$ is at most $1/4$ and strictly smaller than the scalar margins used in the comparison. Once the gain bounds and positive lower bounds on these margins are available, increasing $\omega_1$ and separating the subsequent frequencies makes the required comparisons possible.

Recall that the feedback gain itself must be unconstrained, yet these bounds must always hold to facilitate the proposed reduction. Therefore, we must establish such bounds for any stabilizing gain from properties of the constructed plant. We accomplish this by bounding $\W$ through necessary conditions for stability in terms of the coefficients, and then use the prototype's right half-plane zero to bound every gain entry. On this bounded gain set, we establish explicit positive lower bounds on the scalar margins away from the crossing gains. Note that excluding small neighborhoods of these crossings is necessary because the margins vanish there. These neighborhoods do not obstruct the reverse implication. If a gain lies near a crossing, it already belongs to an enlarged stability interval, while outside these neighborhoods its margin is sufficiently large for Theorem~\ref{thm:localized} to imply stability of the corresponding scalar factor whenever $\Ptrue$ is Hurwitz. Thus, in either case, the gain satisfies the enlarged interval constraint.

We then choose the interval widths, crossing neighborhoods, and center frequencies so that the required comparisons hold while the enlarged constraints still permit rounding by Lemma~\ref{lem:cssp}. Exact gains in $\{1,2\}$ will stabilize the plant for every YES instance, while any stabilizing gain will satisfy these enlarged constraints and hence yield a solution to the original SSP instance. Finally, we show that the resulting plant can be constructed in polynomial time and has polynomial bit length, completing the reduction.

\subsection{Enforcing bounds on stabilizing gains} \label{sec:gb}
We first bound $\W$ without restricting the individual gains. The band-loop terms in \eqref{eq:Ptrue} contribute neither to the coefficient of degree $1$ nor to that of degree $d-1$, so positivity of these two coefficients gives necessary bounds involving only $\W$.

\begin{proposition}[Necessary bounds on $\W$]\label{prop:coeff}
Let $d_4>0$ be the $p^4$ coefficient of $D_b$ and $q_0=(\Wstar)^2-\delta^2$, and
let $\gamma_1$ and $\gamma_{d-1}$ denote the coefficients of $\Ptrue(s,\K)$
of degrees $1$ and $d-1$.
Then
\[
\begin{gathered}
\begin{aligned}
   \gamma_1&=\Bigl(\prod_{j}\omega_j^{10}\Bigr)\Bigl[q_0\,d_4B_w\sum_{i}\omega_i^{-2}+2\Wstar-\W\Bigr]\\
\gamma_{d-1}&=\W+N\,d_4B_w . 
\end{aligned}
\end{gathered}
\]
If $\Ptrue(s,\K)$ is Hurwitz, then
\[
-N\,d_4B_w\;<\;\W\;<\;2\Wstar+q_0\,d_4B_w\sum_{i}\omega_i^{-2}.
\]
\end{proposition}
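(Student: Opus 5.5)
The plan is to read off the two coefficients directly from the expansion \eqref{eq:Ptrue} of $\Ptrue$. Then I would invoke the elementary fact that a monic real Hurwitz polynomial has all coefficients strictly positive. This holds because such a polynomial factors into linear factors $s+\alpha$ with $\alpha>0$ and quadratic factors $s^2+\beta s+\gamma$ with $\beta,\gamma>0$, and a product of polynomials with positive coefficients has positive coefficients. With the two formulas in hand, $\gamma_{d-1}>0$ gives the lower bound on $\W$. Dividing $\gamma_1>0$ by the positive factor $\prod_j\omega_j^{10}$ gives the upper bound.

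For $\gamma_{d-1}$, I would first record the two leading coefficients of each block polynomial from \eqref{eq:NiDi}. In $D_{\omega}$, the term $j=5$ is $(s^2+\omega^2)^5$, which contains only even powers. The term $j=4$ is $d_4(s^2+\omega^2)^4B_ws$, which contributes $d_4B_w s^9$. All other terms have degree at most $8$. Hence $D_i=s^{10}+d_4B_ws^9+\dots$, and $\Dsum=s^3+0\cdot s^2+\dots$. It follows that $\Dsum\prod_jD_j$ has $s^{d-1}$-coefficient $N d_4B_w$. The term $\W\Nsum\prod_jD_j$ has degree $2+10N=d-1$ with leading coefficient $\W$, since $\Nsum=s^2-s$. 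Each band term $\ki_iN_i\Dsum\prod_{j\ne i}D_j$ has degree $8+3+10(N-1)=d-2$ and so contributes nothing. This gives $\gamma_{d-1}=\W+Nd_4B_w$.

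For $\gamma_1$, I would expand each factor to first order at $s=0$. In \eqref{eq:NiDi} the terms with $j\le3$ carry a factor $s^2$. The term $j=5$ equals $\omega^{10}+O(s^2)$ and the term $j=4$ equals $d_4B_w\omega^8s+O(s^2)$, so $D_i=\omega_i^{10}+d_4B_w\omega_i^8s+O(s^2)$. Also $\Dsum=q_0+2\Wstar s+O(s^2)$ and $\Nsum=-s+O(s^2)$. Each $N_i$ is divisible by $s^2$, so the band terms contribute nothing in degrees $0$ and $1$. The linear coefficient of $\Dsum\prod_jD_j$ is, by the product rule, $\prod_j\omega_j^{10}\,\bigl[2\Wstar+q_0\sum_i d_4B_w\omega_i^{-2}\bigr]$. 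The linear coefficient of $\W\Nsum\prod_jD_j$ is $-\W\prod_j\omega_j^{10}$. Adding these yields the stated formula for $\gamma_1$.

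No step is genuinely difficult; the only care needed is in the bookkeeping of degrees. The main points to get right are that $\Dsum$ has no $s^2$ term, that the band terms drop out at both ends because $\deg N_i=8$ and $s^2\mid N_i$, and that the $j=4$ term supplies the only odd-degree contribution of $D_i$ at both the top and the bottom.
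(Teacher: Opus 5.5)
Your proposal is correct and follows essentially the same route as the paper's proof. You read off $\gamma_1$ and $\gamma_{d-1}$ from \eqref{eq:Ptrue} using the low- and high-order terms of $D_i$, $\Dsum$ and $\Nsum$, note that the band terms drop out because $s^2\mid N_i$ and $\deg N_i=8$, and then invoke positivity of the coefficients of a monic real Hurwitz polynomial. Your explicit justification of that positivity fact and your degree bookkeeping are slightly more detailed than the paper's.
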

\GB{
\begin{proof}
Each term $\ki_iN_i\Dsum\prod_{j\ne i}D_j$ in \eqref{eq:Ptrue} is divisible by $s^2$ and has degree at most $d-2$, so it contributes to neither coefficient. By \eqref{eq:NiDi}, $D_i(0)=\omega_i^{10}$, the coefficient of $s$ in $D_i$ is $d_4B_w\omega_i^8$, and its subleading coefficient is $d_4B_w$. Since $\Dsum=s^3+2\Wstar s+q_0$ and $\Nsum=s^2-s$, the degree-one coefficient of $(\Dsum+\W\Nsum)\prod_jD_j$ is
\[
\Bigl(\prod_j\omega_j^{10}\Bigr) \Bigl[q_0d_4B_w\sum_i\omega_i^{-2}+2\Wstar-\W\Bigr],
\]
while its $d-1$ coefficient is $\W+Nd_4B_w$. A monic real Hurwitz polynomial must have positive coefficients, so these identities give the stated strict bounds.
\end{proof}
The frequency conditions simplify the upper bound independently of the individual gains. Under Assumption~\ref{ass:DC}, $\sum_i\omega_i^{-2}\le2/\omega_1^2$ by \ref{assitem:A1}. Also, \ref{assitem:A2} gives $\omega_1\ge4R_H=8(1+\Wstar)^2$ and $\omega_1\ge4B_wR\ge56B_w$, since $R\ge2S_d\ge2d_3\ge14$. Thus
\begin{equation}\label{eq:omega1sq}
\omega_1^2\ge448(1+\Wstar)^2B_w >\frac{12}{7}(\Wstar)^2B_w \ge2q_0d_4B_w,
\end{equation}
where $d_4\le6/7$ and $q_0<(\Wstar)^2$. Consequently every stabilizing gain satisfies $-Nd_4B_w<\W<2\Wstar+1$, and hence
\[
\abs\W\le\bar w\coloneqq2\Wstar+1+Nd_4B_w.
\]
This controls the summing-loop contribution in the following argument, even if individual gain entries are large.
}

Next, we use the prototype's right half-plane zero to bound every stabilizing gain entry. Near the corresponding zero of a band copy, the term with largest absolute gain dominates the other loop terms when the frequencies are sufficiently separated. A polynomial comparison then forces a right half-plane root of $\Ptrue$.

Set
\[
\mu\coloneqq\frac1{500},\qquad \kappa^*\coloneqq\frac2\mu=1000,\qquad
T_{\mathrm{GB}}\coloneqq \frac{8\bar w}{\omega_1}+\frac{2NB_w^2}{\omega_1^2},
\]
where $\mu$ is the lower bound in Proposition~\ref{prop:Hb}\ref{propitem:noroots}.

\begin{proposition}[A priori gain bound]\label{prop:GB}
Let the frequencies satisfy Assumption~\ref{ass:DC} for $(\kappa,w)=(1,\bar w)$, 
and suppose that $T_{\mathrm{GB}} \leq  \frac{\mu}{4}$. If $\Ptrue(s,\K)$ is Hurwitz, then $\max_i\abs{\ki_i}\le\kappa^*$.
\end{proposition}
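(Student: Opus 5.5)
We argue by contradiction. Suppose $\Ptrue(s,\K)$ is Hurwitz and pick an index $i$ with $\abs{\ki_i}=\max_j\abs{\ki_j}=:K$, where we assume $K>\kappa^*=1000$. The idea is to divide the characteristic function by $\ki_i$:
\[
\frac{\chi}{\ki_i}=\Hb^i+\frac{1}{\ki_i}+\frac{\W}{\ki_i}\Hsum+\sum_{j\ne i}\frac{\ki_j}{\ki_i}\Hb^j .
\]
Near the preimage under $p_{\omega_i}$ of the right half-plane zero $p=3$ of $N_b$, the first term $\Hb^i$ dominates. A Rouch\'e comparison between $\Ptrue/\ki_i$ and $N_i\Dsum\prod_{j\ne i}D_j$ will then place a root of $\Ptrue$ in $\CRHP$, which is the contradiction. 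Note that all ratios $\ki_j/\ki_i$ have modulus at most $1$, so the rescaled gain vector satisfies the hypothesis $\kappa=1$. Also $\abs{\W}\le\bar w$ holds by the consequence of Proposition~\ref{prop:coeff}. This is exactly why Assumption~\ref{ass:DC} is imposed only for $(\kappa,w)=(1,\bar w)$.

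\textbf{Comparison region.} Let $\Omega_i\coloneqq\{s:\abs{p_{\omega_i}(s)-3}\le r\}$ for a fixed $r\in[5/8,7/8]$, say $r=3/4$.
\begin{itemize}
\item Since the disk $\abs{p-3}\le7/8$ lies in $\operatorname{Re}p>0$, Proposition~\ref{prop:bpmap}\ref{propitem:bpmap1} puts $\Omega_i$ in the open right half-plane.
\item Solving $s^2-B_wps+\omega_i^2=0$ with $\abs p\le4$ gives $\abs s\in[\omega_i/2,\,2\omega_i]$ by a wide margin, because \ref{assitem:A2} gives $\omega_i\ge4B_wQ(1)\gg4B_w$. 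Hence $\Omega_i\subset\mathcal R_i$.
\item On $\partial\Omega_i$ we have $\abs{p_{\omega_i}-3}=3/4$, so Proposition~\ref{prop:Hb}\ref{propitem:noroots} gives $\abs{\Hb^i(s)}\ge\mu$.
\item $\Omega_i$ contains exactly two roots of $N_i$, namely the two preimages of $p=3$, counted via $N_i(s)=(B_ws)^5N_b(p_{\omega_i}(s))$ and the simplicity in Proposition~\ref{prop:Hb}\ref{propitem:hurwitz}.
\item $\Omega_i$ contains no roots of $D_i$, by the same item.
\item $\Omega_i$ contains no roots of $\Dsum$ or of $D_j$ for $j\ne i$, by the localization of Lemma~\ref{lem:loc} together with \ref{assitem:A1}.
\end{itemize}

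\textbf{Perturbation estimate.} On $\partial\Omega_i\subset\mathcal R_i$, the tail estimates of Lemma~\ref{lem:tail} (those used in the proof of Theorem~\ref{thm:localized}, with $\kappa=1$ for the rescaled gains $\ki_j/\ki_i$) give
\[
\Bigl|\frac{\W}{\ki_i}\Hsum\Bigr|+\sum_{j\ne i}\Bigl|\frac{\ki_j}{\ki_i}\Hb^j\Bigr|\le\frac{8\bar w}{\omega_1}+\frac{2NB_w^2}{\omega_1^2}=T_{\mathrm{GB}}\le\frac\mu4 .
\]
Here we use $\abs{\ki_i}>1$ to drop the factor $1/\ki_i$ on the summing term. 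The constant term satisfies $1/\abs{\ki_i}<1/\kappa^*=\mu/2$. So the total perturbation is strictly less than $3\mu/4<\mu\le\abs{\Hb^i}$ on $\partial\Omega_i$.

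\textbf{Rouch\'e and conclusion.} Multiply by $D_{\mathrm{tot}}$, which is nonzero on $\overline{\Omega_i}$ except through $D_i$, and $D_i$ has no roots there. This yields
\[
\Bigl|\frac{\Ptrue}{\ki_i}-N_i\Dsum\prod_{j\ne i}D_j\Bigr|<\Bigl|N_i\Dsum\prod_{j\ne i}D_j\Bigr| \qquad\text{on }\partial\Omega_i .
\]
Rouch\'e's theorem then gives $\Z(\Ptrue;\Omega_i)=2>0$. Since $\Omega_i$ lies in the right half-plane, $\Ptrue$ is not Hurwitz, a contradiction. Hence $\max_i\abs{\ki_i}\le\kappa^*$.

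\textbf{Main obstacle.} The hard step is the geometric bookkeeping: showing that $\Omega_i$ sits inside the annulus where Lemma~\ref{lem:tail} applies, and that it excludes all other denominators' roots. The off-band estimates must hold for every $j\ne i$ with the rescaled gains bounded by $1$, and the summing tail must hold with $\abs\W\le\bar w$. I would verify this first by explicit root bounds for the quadratic $s^2-B_wps+\omega_i^2$ with $\abs p\le4$.
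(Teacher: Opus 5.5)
Your proof is correct and is essentially the paper's argument. You argue by contradiction at an index of maximal $\abs{\ki_i}$ and apply Rouch\'e to compare $\Ptrue$ with $\ki_i D_{\mathrm{tot}}\Hb^i$ near the preimage of the zero $p=3$, using $\abs{\Hb^i}\ge\mu$ from Proposition~\ref{prop:Hb}\ref{propitem:noroots}, the bound $\abs\W\le\bar w$, and the same off-band tail estimate against $T_{\mathrm{GB}}$ and $1/\kappa^*=\mu/2$. The only difference is the comparison set: you take $p_{\omega_i}^{-1}(\{\abs{p-3}\le 3/4\})$, which contains both preimages $z',\bar z'$ (so the root count is $2$) and makes the boundary condition exact, whereas the paper uses a round disk of radius $3B_w/8$ about $z'$ alone and checks numerically that its boundary maps into $5/8\le\abs{p-3}\le7/8$. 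Your choice avoids that computation but tacitly uses that the preimage is two disjoint Jordan domains; this holds because the critical values $\pm2\omega_i/B_w$ of $p_{\omega_i}$ lie far outside the disk.
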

\begin{proof}
Suppose that $\Ptrue(s,\K)$ is Hurwitz and $M\coloneqq\max_i\abs{\ki_i}>\kappa^*$, and let $i^*$ attain the maximum. By Proposition~\ref{prop:coeff} and the preceding consequence of Assumption~\ref{ass:DC} we have $\abs\W\le\bar w$. Now, compare with $\ki_{i^*}\Hb^{i^*}$ near its right-half-plane zero. The other band gains have modulus at most $M$, while the summing gain has the independent bound $\bar w$. Only the open-loop denominators are localized in this argument.

Write $\omega\coloneqq\omega_{i^*}$, and let $z'$ be the root with positive imaginary part of $s^2-3B_ws+\omega^2=0$. Then $\abs{z'}=\omega$, $\operatorname{Re}z'=3B_w/2$, and $p_\omega(z')=3$. Since $\omega\ge\omega_1\ge4B_wR\ge56B_w$, the disk
\[
\Delta\coloneqq\{s:\abs{s-z'}\le3B_w/8\}
\]
lies in the half-plane $\operatorname{Re}s\ge9B_w/8$ and annulus $\omega/2<\abs s<2\omega$. Now, from \eqref{eq:pw} we have for $s\ne0$,
\[
p_\omega(s)-3=\frac{(s-z')(s-\bar z')}{B_ws}.
\]
Using $\omega-9B_w^2/(4\omega)\le\operatorname{Im}z'\le\omega$
gives
\[
\begin{aligned}
\abs{p_\omega(s)-3}
&\le\frac38\frac{2\omega+3B_w/8}{\omega-3B_w/8}\le0.76,\\
&\hspace{3em}s\in\Delta,\\
\abs{p_\omega(s)-3}
&\ge\frac38\frac{2\omega-9B_w^2/(2\omega)-3B_w/8}{\omega+3B_w/8}\\
&\ge0.74,\qquad s\in\partial\Delta.
\end{aligned}
\]
Note that the upper expression decreases and the lower increases with $\omega/B_w\ge56$; and that their endpoint values are $2697/3560<0.76$ and $37479/50512>0.74$ respectively. Thus $p_\omega(\Delta)$ lies in $\abs{p-3}<7/8$, and its boundary image lies in $5/8\le\abs{p-3}\le7/8$. Proposition~\ref{prop:Hb}\ref{propitem:noroots} therefore excludes zeros of $D_{i^*}$ in $\Delta$ and gives $\abs{\Hb^{i^*}}\ge\mu$ on $\partial\Delta$.

Lemma~\ref{lem:loc}, applied to the open-loop denominators, and \ref{assitem:A1} place the roots of every other denominator at modulus at most $3\omega/8$ or at least $8\omega/3$. Hence $D_{\mathrm{tot}}$ has no zeros in $\Delta$. By Proposition~\ref{prop:Hb}\ref{propitem:hurwitz}, the only possible numerator zero there satisfies $p_\omega(s)=3$. Of its two preimages $z',\bar z'$, only $z'$ belongs to $\Delta$, and it is simple since $p_\omega'(z')=(z'-\bar z')/(B_wz')\ne0$. Consequently $D_{\mathrm{tot}}\Hb^{i^*}=N_{i^*}\Dsum\prod_{j\ne i^*}D_j$ is a polynomial with exactly
one zero in $\Delta$.

On $\partial\Delta$, every other band is off-band by \ref{assitem:A1}, and $\abs s>\omega_1/2$. Lemma~\ref{lem:tail}\ref{lemitem:tail1} and \ref{lemitem:tail2} give $\abs{\Hsum}\le8/\omega_1$ and $\abs{\Hb^j}\le2B_w^2/\omega_1^2$ for $j\ne i^*$. Using \eqref{eq:chitrue}, we obtain
\[
\begin{aligned}
&\abs{\Ptrue-\ki_{i^*}D_{\mathrm{tot}}\Hb^{i^*}}\\
&\quad\le\abs{D_{\mathrm{tot}}}
\left(1+\frac{8\bar w}{\omega_1}+\frac{2(N-1)MB_w^2}{\omega_1^2}\right)\\
&\quad\le\abs{D_{\mathrm{tot}}}(1+T_{\mathrm{GB}}+MT_{\mathrm{GB}})\\
&\quad<M\mu\abs{D_{\mathrm{tot}}}
\le\abs{\ki_{i^*}D_{\mathrm{tot}}\Hb^{i^*}}.
\end{aligned}
\]
The strict inequality follows from $T_{\mathrm{GB}}\le\mu/4$, $M\ge1$, and $M>2/\mu$, since $1+T_{\mathrm{GB}}+MT_{\mathrm{GB}} \le1+M\mu/2<M\mu$. Both compared expressions are polynomials, and $D_{\mathrm{tot}}\ne0$ on the boundary. Rouch\'e's theorem~\cite[Thm.~10.43(b)]{R:87} gives $\Z(\Ptrue;\Delta)=1$, contradicting the Hurwitzness.
\end{proof}

\subsection{Explicit modulus margin bounds}\label{sec:explicit}
The preceding bounds restrict every stabilizing gain to ranges on which we can estimate the scalar modulus margins uniformly. To make the coupling smaller than these margins using frequencies of polynomial bit length, we need \emph{computable} lower bounds rather than qualitative positivity. The following bounds depend explicitly on the distance from the crossing gains, and Definition~\ref{def:schedule} will use them at the exact gains of the forward direction and outside the crossing neighborhoods of the reverse direction.

\begin{lemma}[\rev{Explicit margin bounds}]\label{lem:explicit}
\rev{Let $\eps\in(0,1/8]$, $\Wstar\in\mathbb Z_+$ and $\delta\in(0,1/4]$ be the CSSP parameters. }
\begin{enumerate}[label=(\alph*)]
\item\label{lemitem:explicit1}\rev{Let $\kappa\ge1$ and $R=\max\{S_n,2S_d\}$ as in Section~\ref{sec:equiv}. Set
\[
\begin{gathered}
C_S=14+7\kappa,\qquad Y=\max\{R^2,8\kappa\},\\
M_b(t)=\frac{t^4}{114C_S^4Y^4},\qquad 0<t\le1.
\end{gathered}
\]
For every real $g$ with $\abs{g}\le\kappa$ and distance $\tau>0$ from $\{1-\eps,1+\eps,2-\eps,2+\eps\}$, set $\hat\tau=\min\{\tau,1\}$. Then $m_b(g)\ge M_b(\hat\tau)$.}
\item\label{lemitem:explicit2}\rev{Let $w\ge1$ and set
\[
\begin{gathered}
R_H=2\bigl(1+2\Wstar+(\Wstar)^2\bigr),\\
Y_s=\max\{R_H^2,64w^2\},\\
D_s=\bigl(1+2\Wstar+(\Wstar)^2\bigr)Y_s^2,\\
M_s(t)=\frac{15t^2}{32(w+1)^2D_s},\qquad 0<t\le1.
\end{gathered}
\]
For every real $\W$ with $\abs{\W}\le w$ and distance $\tau_W>0$ from $\{\Wstar-\delta,\Wstar+\delta\}$, set $\hat\tau_W=\min\{\tau_W,1\}$. Then $m_s(\W)\ge M_s(\hat\tau_W)$.}
\end{enumerate}
\end{lemma}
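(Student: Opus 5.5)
We would prove both parts with the same scheme. Write $1+g\Hb(\J\nu)=P_g(\J\nu)/D_b(\J\nu)$, so that
\[
m_b(g)=\inf_\nu\frac{\abs{P_g(\J\nu)}}{\abs{D_b(\J\nu)}}.
\]
The frequency axis is split into a high-frequency part and a compact part. On the high-frequency part we use the decay bound $\abs{\Hb(p)}\le4/\abs p^2$ for $\abs p\ge R$ (Lemma~\ref{lem:decay}). For $\abs\nu^2\ge\max\{R^2,8\kappa\}=Y$ this gives $\abs{g\Hb(\J\nu)}\le 4\kappa/Y\le1/2$, hence $\abs{1+g\Hb}\ge1/2$, which exceeds $M_b(\hat\tau)$ trivially. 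The same step for $\Hsum$ uses $\abs{\Hsum(s)}\le4/\abs s$ for $\abs s\ge R_H$. Then $\abs\nu\ge\sqrt{Y_s}\ge8w$ gives $\abs{\W\Hsum}\le1/2$.

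On the compact part $\abs\nu\le\sqrt Y$ we bound the denominator above and the numerator below. For the denominator, $\abs{D_b(\J\nu)}\le(1+S_d)\max\{1,\abs\nu\}^5\lesssim 20\,Y^{5/2}$ by Proposition~\ref{prop:Hb}\ref{propitem:coeff}. The analogue for the summing block is $\abs{\Dsum(\J\nu)}\le(1+2\Wstar+(\Wstar)^2)Y_s^{3/2}$.

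The key step is a lower bound on $\abs{P_g(\J\nu)}$ in terms of $\tau$. Write $P_g(\J\nu)=U(\nu^2,g)+\J\nu V(\nu^2,g)$, where $U$ and $V$ are real polynomials of low degree in $y=\nu^2$, affine in $g$, with coefficients controlled by $S_d,S_n$ and $\kappa$; this is where $C_S=14+7\kappa$ enters. A root on the axis requires $U=V=0$ at a common $y\ge0$ (or $\nu=0$, i.e.\ $P_g(0)=0$).

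We would not estimate the resultant $\Res_y(U,V)$ abstractly. Instead we use the explicit structure: $U$ and $V$ are affine in $g$, and by Proposition~\ref{prop:Hb}\ref{propitem:hurwitz2} the simultaneous vanishing occurs only at the four crossing gains. Eliminating $y$ gives $r(g)\coloneqq\Res_y(U,V)$ together with $P_g(0)=D_b(0)+gN_b(0)$. Both are polynomials in $g$ whose roots are (among) the four crossing values, and we plan to compute them explicitly from the interpolation data in Appendix~\ref{app:Hb}. A product-of-distances bound then gives $\abs{r(g)}\gtrsim\hat\tau\cdot$const.

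We then convert this to a pointwise bound using the standard inequality. Since $r=AU+BV$ with cofactor polynomials $A,B$ from the Sylvester adjugate, bounded by $C_S^{k}Y^{k}$ on the compact range, we get
\[
\max\{\abs U,\abs{\nu V}\}\ge \abs{r(g)}/(\abs A+\abs B).
\]
The small-$\nu$ region, where $\nu V$ is weak, is handled by $\abs{U}\approx\abs{P_g(0)}$ plus a Lipschitz estimate. The powers of $\hat\tau$ (fourth power) come from taking a square-root-free but crude route, for instance bounding via $\hat\tau^2$ in the resultant and losing a square in the small-$\nu$ interpolation. This is then divided by the denominator bound to obtain $M_b$.

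For $\Hsum$ the computation is explicit and is done first as a model. From the proof of Proposition~\ref{prop:hsum},
\[
\chi_{\mathrm{sum}}(\J\nu,\W)=(q_0-\W\nu^2)+\J\nu(2\Wstar-\W-\nu^2).
\]
If $\abs{\nu^2-(2\Wstar-\W)}$ is large, the imaginary part dominates once $\nu$ is not tiny; for $\nu$ near $0$ we have $\abs{q_0-\W\nu^2}\approx q_0\ge 15/16$. Otherwise we substitute $\nu^2\approx2\Wstar-\W$ into the real part. This gives $\abs{q_0-\W(2\Wstar-\W)}=\abs{(\W-\Wstar)^2-\delta^2}\ge\tau_W(2\delta+\tau_W)\ge\hat\tau_W^2$-type bounds, with an error controlled by $(w+1)$. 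Balancing the two cases yields $15t^2/(32(w+1)^2)$, and dividing by $\abs{\Dsum}\le D_s$ gives $M_s$.

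The main obstacle is part (a): obtaining an explicit, uniform lower bound for the resultant-based quantity in $g$ with constants matching $114C_S^4Y^4$. I expect to need the explicit coefficients of $D_b,N_b$ from the appendix, and to handle the $\nu\approx0$ crossing at $g=1-\eps$ separately.
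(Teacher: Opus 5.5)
Your architecture matches the paper's: decay for $y=\nu^2\ge Y$, the constant term $c_0=\tfrac{12\lambda}{7}\bigl(g-(1-\eps)\bigr)$ near $y=0$, and in between a Sylvester-adjugate B\'ezout identity against the explicit resultant of Lemma~\ref{lem:cross}. Your part (b) is fine. Write $f,h$ for your real and imaginary parts $U,V$. Then substituting $y=2\Wstar-\W$ into the real part is exactly the paper's identity $f-\W h=(\W-\Wstar)^2-\delta^2$.

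In part (a), however, two steps fail as written. The first is the resultant bound. The claim $\abs{r(g)}\gtrsim\hat\tau\cdot\mathrm{const}$ cannot hold with a constant independent of $\eps$. This matters because $M_b$ must be $\eps$-uniform: the lemma is applied with $\eps=1/(8\sum_ia_i)$. The resultant is a constant $K(\eps)\ge625/392$ times $(\eps-g+1)(\eps-g+2)(\eps+g-2)$, and $K(\eps)$ is also bounded above. So for $g=2-\eps+\tau$ with $\tau<\eps$, the resultant is of order $\eps\tau$. What the product of distances gives uniformly is $\abs{r}\ge\tfrac{625}{392}\tau^3\ge\hat\tau^3$, since each linear factor is at least $\tau$. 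This is what the paper uses; splitting at $g=3/2$ would even give $c\,\hat\tau^2$. Also, Proposition~\ref{prop:Hb}\ref{propitem:hurwitz2} alone does not locate the zeros of $r$. The resultant vanishes whenever $f$ and $h$ share any complex root $y$, not only some $y\ge0$, so you genuinely need the explicit factorization.

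The second failing step is the weighting of the imaginary part. The cofactor identity controls $\max\{\abs f,\abs h\}$, not $\max\{\abs f,\abs{\nu h}\}$. Since $\abs{P_g(\J\nu)}^2=f^2+yh^2\ge\min\{1,y\}\max\{f^2,h^2\}$, you must pay the weight $\min\{1,y\}$ on the intermediate range. The cutoff where the constant-term argument stops working depends on $\hat\tau$: $\abs f\ge\abs{c_0}-(\abs{c_2}+\abs{c_4})y\ge\hat\tau/2$ holds only for $y\le y_0=\hat\tau/(2C_S)$. This is the real source of the extra power of $\hat\tau$.

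With cofactor coefficients bounded by $C_S^3$ via Hadamard, one gets $f^2+yh^2\ge y_0\hat\tau^6/(16C_S^6Y^2)=\hat\tau^7/(32C_S^7Y^2)$ for $y_0\le y\le Y$. Dividing by $\abs{D_b(\J\nu)}^2\le400Y^5$ gives $m_b(g)^2\ge\hat\tau^7/(12800C_S^7Y^7)\ge M_b(\hat\tau)^2$, because $12800\le114^2$. So the fourth power in $M_b$ is just a relaxation of $\hat\tau^{7/2}$, not the outcome of ``$\hat\tau^2$ in the resultant and losing a square''. With these two corrections your plan becomes the paper's proof.
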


\begin{proof}
\rev{For~\ref{lemitem:explicit1}, write $P_g=D_b+gN_b$. As in Lemma~\ref{lem:cross}, with $y=\nu^2\ge0$,
\[
\begin{gathered}
P_g(\J\nu)=f(y)+\J\nu h(y),\\
f(y)=c_4y^2-c_2y+c_0,\qquad h(y)=y^2-c_3y+c_1,\\
|P_g(\J\nu)|^2=f(y)^2+yh(y)^2.
\end{gathered}
\]
Since $D_b$ has no imaginary-axis zeros we can divide by $\abs{D_b(\J\nu)}$ and obtain the return difference. For $y>0$, its numerator can be small only if both $f(y)$ and $h(y)$ are small. We first rule out this simultaneous behavior quantitatively, then treat the origin separately because the weight of $h(y)^2$ vanishes there.}

\rev{The coefficient bounds~\eqref{eq:coeffbounds} give
\[
\begin{gathered}
\abs{c_4}\le1,\quad \abs{c_3}\le9+\kappa,\quad \abs{c_2}\le2+3\kappa,\\
\abs{c_1},\abs{c_0}\le4+4\kappa.
\end{gathered}
\]
Thus $C_S$ bounds both $\abs{c_4}+\abs{c_2}+\abs{c_0}$ and $1+\abs{c_3}+\abs{c_1}$. Also, since $\lambda\ge5/4$,
\[
\abs{c_0}=\frac{12\lambda}{7}\abs{g-(1-\eps)}
\ge\frac{15}{7}\tau\ge\hat\tau.
\]
The other three crossing gains appear in the resultant computed in Lemma~\ref{lem:cross}:
\[
\begin{aligned}
\abs{\Res_y(f,h)}
&=\frac{8(4\eps+5)^2(2-11\eps)^2}{49(1-4\eps)^3}\\
&\quad\cdot|(\eps-g+1)(\eps-g+2)(\eps+g-2)|\\
&\ge\frac{625}{392}\tau^3\ge\hat\tau^3.
\end{aligned}
\]
Here each linear factor has modulus at least $\tau$, while $4\eps+5\ge5$, $2-11\eps\ge5/8$ and $0<1-4\eps\le1$.}

\rev{As the resultant does not vanish there is no common zero. Nevertheless, a lower bound on the return difference requires a bounded polynomial identity. Since $c_4=d_4>0$, both $f$ and $h$ have degree two. Let $\mathcal S$ be their $4\times4$ Sylvester matrix with rows given by $yf,f,yh,h$ in the basis $y^3,y^2,y,1$. If $(u_1,u_0,v_1,v_0)$ is the last row of $\operatorname{adj}\mathcal S$, define the degree-at-most-one B\'ezout polynomials
\[
U(y)=u_1y+u_0,\qquad V(y)=v_1y+v_0.
\]
The identity $(\operatorname{adj}\mathcal S)\mathcal S=(\det\mathcal S)I$ gives
\[
U(y)f(y)+V(y)h(y)=\Res_y(f,h).
\]
Each coefficient of $U,V$ is a signed $3\times3$ minor. Since each row norm of $\mathcal S$ is at most $C_S$, Hadamard's inequality bounds these coefficients by $C_S^3$. Consequently, for $y\ge0$,
\[
\begin{aligned}
|U(y)|+|V(y)|&\le4C_S^3\max\{1,y\},\\
\max\{|f(y)|,|h(y)|\}
&\ge\frac{\hat\tau^3}{4C_S^3\max\{1,y\}}.
\end{aligned}
\]
This estimate controls intermediate frequencies. Near zero we use $c_0$, and beyond $Y$ the decay of $\Hb$ suffices. Set $y_0=\hat\tau/(2C_S)\le1$.}
\begin{enumerate}[label=(a.\arabic*)]
\item\rev{If $y\ge Y$, then $\nu\ge R$ for $\nu\ge0$ and Lemma~\ref{lem:decay} gives $|g\Hb(\J\nu)|\le4\kappa/y\le1/2$. Hence $|1+g\Hb(\J\nu)|\ge1/2$.}
\item\label{proofitem:explicit_range2}\rev{If $0\le y\le y_0$, then $|f(y)|\ge|c_0|-(|c_2|+|c_4|)y\ge\hat\tau/2$.}
\item\label{proofitem:explicit_range3}\rev{If $y_0\le y\le Y$, then
\[
\begin{aligned}
f^2+yh^2&\ge\min\{1,y\}(f^2+h^2)\\
&\ge y_0\max\{f^2,h^2\}
\ge\frac{\hat\tau^7}{32C_S^7Y^2}.
\end{aligned}
\]}
\end{enumerate}
\rev{In ranges~\ref{proofitem:explicit_range2} and~\ref{proofitem:explicit_range3}, $S_d\le19$ gives $|D_b(\J\nu)|\le(1+S_d)\max\{1,|\nu|\}^5\le20Y^{5/2}$. The moduli are even in $\nu$, so these three ranges cover all real frequencies and yield
\[
m_b(g)^2\ge\min\left\{\frac14,\frac{\hat\tau^2}{1600Y^5},
\frac{\hat\tau^7}{12800C_S^7Y^7}\right\}.
\]
Since $\hat\tau\le1$ and $C_S,Y\ge1$, the third entry is the smallest. Using $12800\le114^2$ bounds it below by $M_b(\hat\tau)^2$, proving~\ref{lemitem:explicit1}.}

\rev{For~\ref{lemitem:explicit2}, put $q_0=(\Wstar)^2-\delta^2\ge15/16$. The summing numerator at $\J\nu$ has the same form $f(y)+\J\nu h(y)$, now with
\[
f(y)=q_0-\W y,\qquad h(y)=2\Wstar-\W-y.
\]
Here elimination is immediate: $f-\W h=(\W-\Wstar)^2-\delta^2$. The two factors on the right have modulus at least $\tau_W$, so for every $y\ge0$,
\[
\max\{|f(y)|,|h(y)|\}\ge\frac{\hat\tau_W^2}{1+w}.
\]
We again combine a constant-term estimate near zero, this estimate at intermediate frequencies, and decay at high frequencies. Set $y_W=15/(32(w+1))\le1$.}
\begin{enumerate}[label=(b.\arabic*)]
\item\rev{If $y\ge Y_s$, then $|\nu|\ge R_H$ and $|\nu|\ge8w$, so Lemma~\ref{lem:decay} gives $|\W\Hsum(\J\nu)|\le4w/|\nu|\le1/2$.}
\item\label{proofitem:explicit_srange2}\rev{If $0\le y\le y_W$, then $|f(y)|\ge q_0-wy_W\ge q_0-15/32\ge q_0/2$.}
\item\label{proofitem:explicit_srange3}\rev{If $y_W\le y\le Y_s$, then $f^2+yh^2\ge y_W\max\{f^2,h^2\}\ge y_W\hat\tau_W^4/(1+w)^2$.}
\end{enumerate}
\rev{In the last two ranges, $|\Dsum(\J\nu)|\le(1+2\Wstar+q_0)\max\{1,|\nu|\}^3\le D_s$. Since $\Dsum$ has no imaginary-axis zeros by Proposition~\ref{prop:hsum}, division is valid and
\[
m_s(\W)\ge\min\left\{\frac12,\frac{q_0}{2D_s},
\frac{\sqrt{y_W}\hat\tau_W^2}{(1+w)D_s}\right\}\ge M_s(\hat\tau_W).
\]
The last inequality follows from $\sqrt{y_W}\ge y_W$, $\hat\tau_W\le1$, $q_0\ge15/16$ and $D_s\ge1$.}
\end{proof}

\subsection{Parameter choice}

The following definition fixes all parameters of the plant as functions of the
SSP instance.

\begin{definition}[Parameter choice]\label{def:schedule}
Given an SSP instance $(a,S)$ with $\kappa^*=1000$ as in
Section~\ref{sec:gb}, set
\[
\begin{gathered}
\Wstar\coloneqq S+\sum_ia_i,\\
\delta\coloneqq \tfrac14,\\
\eps\coloneqq \frac{1}{8\sum_ia_i},\\
\theta\coloneqq \frac{\eps}{2},\\
\theta_W\coloneqq \tfrac1{16},\\
B_w\coloneqq 1 .
\end{gathered}
\]
Moreover, set
\[
\begin{gathered}
X_b^\theta\coloneqq \bigcup_{c\in\{1-\eps,\,1+\eps,\,2-\eps,\,2+\eps\}}(c-\theta,\,c+\theta),
\\
X_s^{\theta_W}\coloneqq \bigcup_{c\in\{\Wstar-\delta,\,\Wstar+\delta\}}(c-\theta_W,\,c+\theta_W),
\end{gathered}
\]
the unions of the open $\theta$-neighborhoods of the crossing gains and of the
open $\theta_W$-neighborhoods of $\Wstar\pm\delta$,
\[
\bar m\coloneqq \min\bigl\{M_b(\theta),\ M_s(\theta_W)\bigr\},
\]
with $M_b$ from Lemma~\ref{lem:explicit}(a) for $\kappa=\kappa^*$ and $M_s$ from
Lemma~\ref{lem:explicit}(b) for $w=\bar w$, and choose $\omega_1$ as the
smallest power of two strictly greater than
\begin{align*}
\max\Bigl\{&4Q(\kappa^*),\ 4R_W(\bar w),\ 4R_H,\ 4R,\\
&2000\bigl(8\bar w+2N\bigr),\\
&\bigl(8\bar w+2(N+1)\kappa^*\bigr)\max\{4,\ 1/\bar m\}\Bigr\},
\end{align*}
and $\omega_i\coloneqq 4^{\,i-1}\omega_1$.
\end{definition}
The plant obtained from the parameter choice above is the realization of
Lemma~\ref{lem:realization} with these parameters.

The next lemma states the properties of the parameter choice that the
correctness proofs use.
\begin{lemma}[Properties of the parameter choice]\label{lem:schedule}
The parameter choice of Definition~\ref{def:schedule} has the following
properties.
\begin{enumerate}[label=(\alph*)]
\item Assumption~\ref{ass:DC} holds for every pair of bounds
$\kappa\in[1,\kappa^*]$ and $w\in[1,\bar w]$, and the hypotheses of
Proposition~\ref{prop:GB} hold. \label{lemitem:schedule_ass}
\item $m_b(1)\ge\bar m$, $m_b(2)\ge\bar m$, and $m_s(\Wstar)\ge\bar
    m$.\label{lemitem:schedule_bounds1}
\item If $\abs g\le\kappa^*$ and $g\notin X_b^{\theta}$, then $m_b(g)\ge\bar m$. Additionally,
if $\abs{\W}\le\bar w$ and $\W\notin X_s^{\theta_W}$, then $m_s(\W)\ge\bar m$.
\label{lemitem:schedule_bounds2}
\item For every $\K$ with $\max_i\abs{\ki_i}\le\kappa^*$ and
$\abs{\W}\le\bar w$, the coupling bound $T$ for
$(\kappa,w)=(\kappa^*,\bar w)$ satisfies $T<\min\{\bar m,\,1/4\}$.
\label{lemitem:schedule_K}
\end{enumerate}
\end{lemma}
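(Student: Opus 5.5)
The plan is to check the four items directly from Definition~\ref{def:schedule}. Each item reduces to a monotonicity observation plus one earlier estimate.

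For \ref{lemitem:schedule_ass}, note that $Q(\kappa)$ and $R_W(w)$ are increasing in their arguments. Hence $\omega_1>4Q(\kappa^*)\ge4Q(\kappa)$ and $\omega_1>4R_W(\bar w)\ge 4R_W(w)$. With $B_w=1$, the terms $4R_H$ and $4R$ cover the rest of \ref{assitem:A2}. Condition \ref{assitem:A1} holds with equality by $\omega_i=4^{i-1}\omega_1$. For Proposition~\ref{prop:GB}, we need Assumption~\ref{ass:DC} at $(1,\bar w)$, which is a special case of the above. We also need $T_{\mathrm{GB}}\le\mu/4=1/2000$. Since $B_w=1$ and $\omega_1\ge1$, we have $2N/\omega_1^2\le 2N/\omega_1$, so $T_{\mathrm{GB}}\le(8\bar w+2N)/\omega_1<1/2000$ by the term $2000(8\bar w+2N)$ in the maximum. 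One point to watch: $\bar w=2\Wstar+1+Nd_4B_w$ depends on $d_4$, which is fixed before $\omega_1$. So the definition is not circular, and the derivation of $\abs\W\le\bar w$ preceding Proposition~\ref{prop:GB} indeed applies, because it only used Assumption~\ref{ass:DC}.

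For \ref{lemitem:schedule_bounds1} and \ref{lemitem:schedule_bounds2}, apply Lemma~\ref{lem:explicit} with $\kappa=\kappa^*$ and $w=\bar w$. This requires $\delta=1/4\le1/4$ and $\eps\le1/8$, both of which hold. If $\abs g\le\kappa^*$ and $g\notin X_b^\theta$, the distance $\tau$ from $g$ to the crossing set is at least $\theta$. Since $\theta<1$, we get $\hat\tau\ge\theta$. Monotonicity of $M_b(t)\propto t^4$ then gives $m_b(g)\ge M_b(\theta)\ge\bar m$. The summing case is identical, using $\theta_W=1/16$ and $M_s\propto t^2$ with $w=\bar w\ge1$.

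For \ref{lemitem:schedule_bounds1}, it suffices to show $1,2\notin X_b^\theta$ and $\Wstar\notin X_s^{\theta_W}$, and then invoke \ref{lemitem:schedule_bounds2}. The distance from $1$ or $2$ to the crossing set is $\eps>\theta=\eps/2$. The distance from $\Wstar$ to $\Wstar\pm\delta$ is $1/4>1/16$. Also $\abs{1},\abs{2}\le\kappa^*$ and $\abs{\Wstar}\le\bar w$.

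For \ref{lemitem:schedule_K}, bound $T$ with $(\kappa,w)=(\kappa^*,\bar w)$ and $B_w=1$. Using $\omega_1\ge1$,
\[
T\le\frac{8\bar w}{\omega_1}+\frac{2(N+1)\kappa^*}{\omega_1^2}\le\frac{8\bar w+2(N+1)\kappa^*}{\omega_1}.
\]
By the last term of the maximum, this is strictly less than $1/\max\{4,1/\bar m\}=\min\{1/4,\bar m\}$. The argument is essentially bookkeeping. The only step needing care is keeping the quantifiers consistent: the same $\omega_1$ must serve uniformly for all $\kappa\le\kappa^*$ and $w\le\bar w$. Monotonicity of $Q$, $R_W$ and $T$ in $(\kappa,w)$ handles this.
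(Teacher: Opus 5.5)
Your proposal is correct and follows the paper's proof essentially step by step: monotonicity of $Q$ and $R_W$ together with $B_w=1$ for (a), the term $2000(8\bar w+2N)$ for the bound on $T_{\mathrm{GB}}$, Lemma~\ref{lem:explicit} with $\kappa=\kappa^*$ and $w=\bar w$ plus monotonicity of $M_b,M_s$ for (b)--(c), and the last term of the maximum for (d). The only cosmetic difference is that you obtain (b) as a special case of (c), by checking $1,2\notin X_b^\theta$ and $\Wstar\notin X_s^{\theta_W}$, whereas the paper evaluates $M_b(\eps)\ge M_b(\theta)$ and $M_s(\delta)\ge M_s(\theta_W)$ directly; the two routes are equivalent.
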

\begin{proof}
    \ref{lemitem:schedule_ass} \ref{assitem:A1} holds since $\omega_{i+1}=4\omega_i$.
The functions $Q$ and $R_W$ are increasing, so \ref{assitem:A2} for the bounds
$(\kappa^*,\bar w)$ implies \ref{assitem:A2} for all smaller bounds.
With $B_w=1$, the frequency $\omega_1$ exceeds $4Q(\kappa^*)$, $4R_W(\bar w)$,
$4R_H$ and $4R=4B_wR$ by construction.
For Proposition~\ref{prop:GB}: 
$\omega_1>2000\bigl(8\bar w+2N\bigr)$ together with $\omega_1\ge1$
gives
\[
T_{\mathrm{GB}}\ \le\ \frac{8\bar w+2N}{\omega_1}\ <\ \frac{1}{2000}\ =\ \frac{\mu}{4}.
\]

\ref{lemitem:schedule_bounds1}, \ref{lemitem:schedule_bounds2} Lemma~\ref{lem:explicit} applies with $\kappa=\kappa^*$ and $w=\bar w$,
since $\eps=1/(8\sum_ia_i)\le1/8$ and $\delta=1/4$.
The distance from $1$ and from $2$ to the crossing gains $\{1\pm\eps,2\pm\eps\}$
is $\eps$, and the distance from $\Wstar$ to $\{\Wstar\pm\delta\}$ is $\delta$;
both are at most $1$.
A gain at distance at least $\theta$ from the crossing gains has
$\hat\tau\ge\theta$, because $\theta\le1$.
Likewise $\hat\tau_W\ge\theta_W$ at distance at least $\theta_W$.
The functions $M_b$ and $M_s$ are increasing on $(0,1]$, since $M_b(t)$ is a
positive multiple of $t^4$ and $M_s(t)$ of $t^2$, and $\theta\le\eps$,
$\theta_W\le\delta$.
The lemma therefore gives $m_b(1),m_b(2)\ge M_b(\eps)\ge M_b(\theta)\ge\bar m$
and $m_s(\Wstar)\ge M_s(\delta)\ge M_s(\theta_W)\ge\bar m$.
The same monotonicity gives both implications of \ref{lemitem:schedule_bounds2}.

\ref{lemitem:schedule_K} With these bounds,
$T\le8\bar w/\omega_1+2(N+1)\kappa^*/\omega_1^{2}\le\bigl(8\bar w+2(N+1)\kappa^*\bigr)/\omega_1$,
using $\omega_1\ge1$.
By construction
$\omega_1>\bigl(8\bar w+2(N+1)\kappa^*\bigr)\max\{4,\,1/\bar m\}$, so
$T<\min\{\bar m,\,1/4\}$.
\end{proof}

\subsection{Correctness}

The next proposition proves the forward direction of the reduction.

\begin{proposition}[Sufficiency]\label{prop:sufficiency}
Assume the parameter choice of Definition~\ref{def:schedule}.
If the SSP instance is a YES instance, then there exists
$\hat\K\in\{1,2\}^{1\times N}$ with $\sum_ia_i\hat\ki_i=\Wstar$.
For every such $\hat\K$ the polynomial $\Ptrue(s,\hat\K)$ is Hurwitz, so the
plant is stabilizable.
\end{proposition}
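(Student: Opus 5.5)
The proof follows Theorem~\ref{thm:localized}\ref{thmitem:loc3}, applied with the bounds $(\kappa,w)=(\kappa^*,\bar w)$ from Definition~\ref{def:schedule}.

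\emph{Existence of $\hat\K$.} Take an SSP solution $x\in\{0,1\}^N$ and set $\hat\K=\mathbf 1+x$. Then
\[
\sum_ia_i\hat\ki_i=\sum_ia_i+S=\Wstar,
\]
so $\hat\K\in\{1,2\}^{1\times N}$ has the required weighted sum.

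\emph{Gain bounds.} Fix any such $\hat\K$. Each entry satisfies $\abs{\hat\ki_i}\le2\le\kappa^*$. Also $\W=\Wstar$ and $\bar w=2\Wstar+1+Nd_4B_w$, so $1\le\abs\W\le\bar w$ because $d_4>0$. Hence the hypotheses $\max_i\abs{\hat\ki_i}\le\kappa^*$ and $\abs\W\le\bar w$ of the comparison hold.

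\emph{Hypotheses of Theorem~\ref{thm:localized}.} By Lemma~\ref{lem:schedule}\ref{lemitem:schedule_ass}, Assumption~\ref{ass:DC} holds for $(\kappa^*,\bar w)$. By Lemma~\ref{lem:schedule}\ref{lemitem:schedule_K}, the coupling bound satisfies $T<\min\{\bar m,1/4\}$. By Lemma~\ref{lem:schedule}\ref{lemitem:schedule_bounds1}, the margins satisfy $m_i=m_b(\hat\ki_i)\ge\bar m>T$ for $1\le i\le N$, since $\hat\ki_i\in\{1,2\}$, and $m_0=m_s(\Wstar)\ge\bar m>T$.

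\emph{Conclusion.} Theorem~\ref{thm:localized}\ref{thmitem:loc3} now gives $\Z(\Ptrue;\CRHP)=\Z(\Pnom;\CRHP)$. Each factor of $\Pnom$ is Hurwitz:
\begin{itemize}
\item[] $Q_i=D_i+\hat\ki_iN_i$ is Hurwitz by Proposition~\ref{prop:Hb}\ref{propitem:hurwitz2} together with Proposition~\ref{prop:bpmap}\ref{propitem:bpmap2}, because $1\in(1-\eps,1+\eps)$ and $2\in(2-\eps,2+\eps)$;
\item[] $Q_0=\Dsum+\Wstar\Nsum$ is Hurwitz by Proposition~\ref{prop:hsum}, because $\Wstar\in(\Wstar-\delta,\Wstar+\delta)$.
\end{itemize}
Therefore $\Z(\Pnom;\CRHP)=0$, so $\Ptrue(\cdot,\hat\K)$ is Hurwitz. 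Lemma~\ref{lem:realization} then shows that $A+B\hat\K C$ is Hurwitz stable, so the plant is stabilizable.

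\emph{Main obstacle.} No step here is a real obstacle; the argument only assembles earlier results. The one point needing care is that the bounds $(\kappa,w)$ fed into the theorem must be exactly those used in Lemma~\ref{lem:schedule}. That requires $\abs{\Wstar}\le\bar w$ and $\abs{\hat\ki_i}\le\kappa^*$, both checked above. Note that the a priori bound of Proposition~\ref{prop:GB} is not needed in this direction, because the gain $\hat\K$ is explicit.
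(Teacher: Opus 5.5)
Your proposal is correct and follows the paper's proof essentially step for step. Like the paper, you set $\hat\K=\mathbf 1+x$, check the gain bounds for $(\kappa^*,\bar w)$, and use Lemma~\ref{lem:schedule}(a), (b), (d) to meet the hypotheses of Theorem~\ref{thm:localized}\ref{thmitem:loc3}, then conclude from Hurwitzness of every nominal factor; your side remarks (Proposition~\ref{prop:GB} is not needed here, and Lemma~\ref{lem:realization} turns Hurwitzness of $\Ptrue$ into stability of $A+B\hat\K C$) are accurate and only make explicit what the paper leaves implicit.
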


\begin{proof}
If $\sum_ia_ix_i=S$ with $x\in\{0,1\}^N$, then the row $\hat\K$ with entries
$\hat\ki_i\coloneqq x_i+1$ satisfies $\sum_ia_i\hat\ki_i=S+\sum_ia_i=\Wstar$.
Let $\hat\K\in\{1,2\}^{1\times N}$ satisfy $\sum_ia_i\hat\ki_i=\Wstar$.
At $\K=\hat\K$ the bounds $\max_i\abs{\hat\ki_i}\le2\le\kappa^*$ and
$\abs\W=\Wstar\le\bar w$ hold.
The margins are $m_i\in\{m_b(1),m_b(2)\}$ and $m_0=m_s(\Wstar)$, and each is
at least $\bar m$ by Lemma~\ref{lem:schedule}(b).
Lemma~\ref{lem:schedule}(a) gives Assumption~\ref{ass:DC} with bounds
$(\kappa^*,\bar w)$, and Lemma~\ref{lem:schedule}(d) gives
$T<\min\{\bar m,1/4\}$.
Theorem~\ref{thm:localized}(c) applies, and every nominal factor is Hurwitz.
The factors $D_b+1\cdot N_b$ and $D_b+2\cdot N_b$ are Hurwitz by
Proposition~\ref{prop:Hb}\ref{propitem:hurwitz2}, so $D_i+\hat\ki_iN_i$ is
Hurwitz for every $i$ by Proposition~\ref{prop:bpmap}\ref{propitem:bpmap2}, and
$\Dsum+\Wstar\Nsum$ is Hurwitz by Proposition~\ref{prop:hsum}.
Hence $\Z(\Ptrue;\CRHP)=0$.
\end{proof}

The next proposition proves the reverse direction.

\begin{proposition}[Necessity]\label{prop:necessity}
Assume the parameter choice of Definition~\ref{def:schedule}.
If some $\K\in\mathbb{R}^{1\times N}$ makes $\Ptrue(s,\K)$ Hurwitz, then the
SSP instance is a YES instance.
\end{proposition}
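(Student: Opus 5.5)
Suppose $\Ptrue(s,\K)$ is Hurwitz. The plan is to first secure the a priori bounds, then show that each gain lies in an enlarged interval, and finally round. By Proposition~\ref{prop:coeff} and the consequence of Assumption~\ref{ass:DC} derived after it (valid by Lemma~\ref{lem:schedule}\ref{lemitem:schedule_ass}), we get $\abs\W\le\bar w$. Proposition~\ref{prop:GB}, whose hypotheses hold by Lemma~\ref{lem:schedule}\ref{lemitem:schedule_ass}, then gives $\max_i\abs{\ki_i}\le\kappa^*$. Hence Assumption~\ref{ass:DC} holds with $(\kappa,w)=(\kappa^*,\bar w)$, and Lemma~\ref{lem:schedule}\ref{lemitem:schedule_K} yields $T<\min\{\bar m,1/4\}$. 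So Theorem~\ref{thm:localized} is available with these bounds.

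Next we show that every $\ki_i\in(1-\eps-\theta,1+\eps+\theta)\cup(2-\eps-\theta,2+\eps+\theta)$, and that $\abs{\W-\Wstar}<\delta+\theta_W$. The argument is a case split for each index $i\ge1$.

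\emph{Case $\ki_i\in X_b^\theta$.} Then $\ki_i$ is within $\theta$ of a crossing gain $c\in\{1\pm\eps,2\pm\eps\}$, so $\abs{\ki_i-1}<\eps+\theta$ or $\abs{\ki_i-2}<\eps+\theta$ directly.

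\emph{Case $\ki_i\notin X_b^\theta$.} Lemma~\ref{lem:schedule}\ref{lemitem:schedule_bounds2} gives $m_i\ge\bar m>T$. Theorem~\ref{thm:localized}\ref{thmitem:loc2} then gives $\Z(Q_i;\CRHP)=\Z(\Ptrue;\mathcal R_i)=0$, so $Q_i$ is Hurwitz. By Proposition~\ref{prop:bpmap}\ref{propitem:bpmap2} the prototype factor $D_b+\ki_iN_b$ is Hurwitz, and Proposition~\ref{prop:Hb}\ref{propitem:hurwitz2} places $\ki_i\in(1-\eps,1+\eps)\cup(2-\eps,2+\eps)$.

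The same dichotomy applies to $i=0$. If $\W\in X_s^{\theta_W}$, then $\abs{\W-\Wstar}<\delta+\theta_W$. Otherwise Lemma~\ref{lem:schedule}\ref{lemitem:schedule_bounds2} and Theorem~\ref{thm:localized}\ref{thmitem:loc2} make $\Dsum+\W\Nsum$ Hurwitz, and Proposition~\ref{prop:hsum} gives $\abs{\W-\Wstar}<\delta$.

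Finally, $\K$ is feasible for the CSSP instance $(a,\Wstar,\eps',\delta')$ with $\eps'=\eps+\theta=\tfrac32\eps$ and $\delta'=\delta+\theta_W=\tfrac5{16}$. Both are rational and lie in $(0,1)$. Moreover $\delta'+\eps'\sum_ia_i=\tfrac5{16}+\tfrac3{16}=\tfrac12<1$. Lemma~\ref{lem:cssp} then yields $\hat\K\in\{1,2\}^{1\times N}$ with $\sum_ia_i\hat\ki_i=\Wstar=S+\sum_ia_i$. Setting $x_i=\hat\ki_i-1$ gives $\sum_ia_ix_i=S$, so the SSP instance is YES.

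The substantive step is the near-crossing case split. The key point is that Theorem~\ref{thm:localized}\ref{thmitem:loc2} is local, so a gain sitting near a crossing, with vanishing margin, does not block the conclusion for the other loops. The remaining work is bookkeeping: checking that the enlarged slack still satisfies the rounding hypothesis, and that the a priori bounds hold before Theorem~\ref{thm:localized} is invoked rather than being assumed.
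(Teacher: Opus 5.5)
Your proposal is correct and follows the paper's proof essentially step for step. You obtain the a priori bounds from Propositions~\ref{prop:coeff} and~\ref{prop:GB}, split each gain on whether it lies in the crossing neighborhoods and use the local count of Theorem~\ref{thm:localized}\ref{thmitem:loc2} when it does not, and then round via Lemma~\ref{lem:cssp} with the enlarged slack $\eps+\theta$ and $\delta+\theta_W$. The only differences are cosmetic: you derive $\abs\W\le\bar w$ before the gain bound, and you explicitly check that $\eps'$ and $\delta'$ are admissible CSSP parameters.
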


\begin{proof}
The hypotheses of Proposition~\ref{prop:GB} hold by Lemma~\ref{lem:schedule}(a),
so $\max_i\abs{\ki_i}\le\kappa^*$.
By Proposition~\ref{prop:coeff} and Assumption~\ref{ass:DC}, which holds by
Lemma~\ref{lem:schedule}(a), we get $-N\,d_4<\W<2\Wstar+1$, hence
$\abs{\W}\le\bar w$.
By Lemma~\ref{lem:schedule}(a) and (d), Assumption~\ref{ass:DC} with bounds
$(\kappa^*,\bar w)$ and the inequality $T<\min\{\bar m,1/4\}$ hold at $\K$.

Fix $i\in\{1,\dots,N\}$.
Either $\ki_i\in X_b^\theta$, and then $\ki_i$ lies within $\theta$ of one of
$1\pm\eps,2\pm\eps$, so
$\ki_i\in(1-\eps',1+\eps')\cup(2-\eps',2+\eps')$ with $\eps'\coloneqq \eps+\theta$.
Or $\ki_i\notin X_b^\theta$.
Then $\abs{\ki_i}\le\kappa^*$, so
$m_b(\ki_i)\ge\bar m>T$ by Lemma~\ref{lem:schedule}(c), and
Theorem~\ref{thm:localized}(b) for the region $\mathcal{R}_{i}$ gives
\[
\Z\bigl(D_i+\ki_iN_i;\CRHP\bigr)=\Z\bigl(\Ptrue;\mathcal{R}_{i}\bigr)=0,
\]
because $\Ptrue$ is Hurwitz.
So $D_i+\ki_iN_i$ is Hurwitz, and $\ki_i\in(1-\eps,1+\eps)\cup(2-\eps,2+\eps)$
by Propositions~\ref{prop:bpmap}\ref{propitem:bpmap2} and~\ref{prop:Hb}\ref{propitem:hurwitz2}.
In particular $\ki_i$ lies in the enlarged intervals.
The same dichotomy for the summing factor and the region $\mathcal{R}_0$, with
$\abs\W\le\bar w$, the set $X_s^{\theta_W}$, and Lemma~\ref{lem:schedule}(c),
gives $\abs{\W-\Wstar}<\delta'\coloneqq \delta+\theta_W$.

The enlarged parameters satisfy
\[
\delta'+\eps'\sum_ia_i=\tfrac14+\tfrac1{16}+\tfrac{3}{16}=\tfrac12<1 .
\]
Hence $\K$ witnesses that $(a,\Wstar,\eps',\delta')$ is a YES instance of
CSSP, and Lemma~\ref{lem:cssp} gives $\hat\K\in\{1,2\}^{1\times N}$ with
$\sum_ia_i\hat\ki_i=\Wstar$, that is $\sum_ia_ix_i=S$ for $x_i=\hat\ki_i-1$.
\end{proof}

The two directions combine into the reduction.

\begin{theorem}[Reduction]\label{thm:main}
The map from SSP instances  $(a,S)$ to the plant $(A,B,C)$ of
Definition~\ref{def:schedule} is computable in polynomial time, and an SSP instance is a YES instance
if and only if its image is a YES instance of unconstrained SOFS.
\end{theorem}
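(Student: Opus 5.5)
The proof combines the two directions already established with a bit-length and running-time analysis of Definition~\ref{def:schedule}.

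\textbf{Correctness.} Correctness follows from Propositions~\ref{prop:sufficiency} and~\ref{prop:necessity}, together with Lemma~\ref{lem:realization}. The lemma identifies Hurwitz stability of $A+B\K C$ with Hurwitzness of $\Ptrue(\cdot,\K)$. So a YES instance of SSP yields the stabilizing gain $\hat\K=\mathbf 1+x$. Conversely, any real stabilizing $\K$ yields a subset-sum solution. No constraint on $\K$ is needed, because Proposition~\ref{prop:necessity} starts from an arbitrary real gain.

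\textbf{Polynomial time.} I will walk through Definition~\ref{def:schedule} quantity by quantity.

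\begin{enumerate}[label=(\roman*)]
\item The values $\Wstar$, $\delta$, $\eps$, $\theta$, $\theta_W$ and $B_w$ are rationals whose bit length is polynomial in $\len(a,S)$.
\item By Proposition~\ref{prop:Hb}, the coefficients of $N_b$ and $D_b$ come from solving an $8\times8$ rational linear system in $\eps$. They are therefore computable in polynomial time with polynomial bit length.
\item $\bar w=2\Wstar+1+Nd_4$ is rational of polynomial size. The same holds for $Q(\kappa^*)$, $R_W(\bar w)$, $R_H$ and $R$. For the latter two I may use the upper bounds $S_d\le19$ and $S_n\le10$ if exact values are inconvenient.
\item The key point is $\bar m$. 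Both $M_b(\theta)$ and $M_s(\theta_W)$ are explicit rational expressions: $C_S$, $Y$, $Y_s$ and $D_s$ are fixed polynomials in $\kappa^*$, $R$, $\bar w$ and $\Wstar$, each of constant degree. Hence $1/\bar m$ is a rational number of magnitude at most $2^{\mathrm{poly}(\len)}$. For instance, $M_b(\theta)^{-1}=114C_S^4Y^4/\theta^4$ with $\theta^{-1}=16\sum_ia_i$.
\item Consequently the maximum defining $\omega_1$ is at most $2^{\mathrm{poly}}$. The smallest power of two exceeding it has exponent polynomial in $\len$, and it can be found by comparing rationals, for example through bit lengths of numerators and denominators followed by exact comparison.
\item Then $\omega_N=4^{N-1}\omega_1$ has $O(N)+\mathrm{poly}$ bits.
\end{enumerate}

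\textbf{Plant data.} Lemma~\ref{lem:realization} gives $(A,B,C)$ of dimension $10N+3$ with rational entries of polynomial bit length. I will check directly that the entries are the coefficients of $N_b$ and $D_b$, $\omega_i$, $B_w$, the coefficients $2\Wstar$ and $q_0$ of $\Hsum$, and products $a_iC_s$. No powers such as $\omega_i^{10}$ appear in the matrices, since the realization of Proposition~\ref{prop:bpmap}\ref{propitem:bpmap4} uses only $\omega_iI$ blocks. There are $O(N^2)$ entries, so the total output length and computation time are polynomial.

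\textbf{Main obstacle.} The one step needing care is confirming that $1/\bar m$ is only singly exponential, and hence that $\omega_1$ has polynomially many bits. This follows because $M_b$ and $M_s$ are fixed-degree monomial expressions in polynomially sized quantities. I expect this to be the only nontrivial verification.
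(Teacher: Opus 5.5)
Your proposal is correct and follows the paper's proof. The equivalence comes from Propositions~\ref{prop:sufficiency} and~\ref{prop:necessity} via Lemma~\ref{lem:realization}. The size bound rests on the explicit constant-degree formulas of Lemma~\ref{lem:explicit}, which show that $\log_2(1/\bar m)$, and hence $\log_2\omega_1$, is polynomial. The spacing $\omega_i=4^{i-1}\omega_1$ then adds only $O(N)$ bits.

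Your extra checks match the paper's argument and are a little more explicit. These are the check that the realization contains only $\pm\omega_i$ (no powers such as $\omega_i^{10}$), and the remark on locating the power of two by bit length and exact comparison.
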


\begin{proof}
The equivalence is Propositions~\ref{prop:sufficiency} and~\ref{prop:necessity},
together with Lemma~\ref{lem:realization}.
For the size, Lemma~\ref{lem:realization} gives that the bit length of $(A,B,C)$
is polynomial in the bit lengths of $\eps$, $\delta$, $\Wstar$, the $a_i$,
$B_w=1$, and the $\omega_i$.
The first four have bit length polynomial in the instance size.
For $\omega_i=4^{i-1}\omega_1$ it suffices that $\log_2\omega_1$ is polynomial
in the instance size.
This holds because $\log_2\bar w$ is polynomial in the instance size.
So is $\log_2(1/\bar m)$, by the explicit formulas of Lemma~\ref{lem:explicit}
with $\theta=1/(16\sum_ia_i)$, $\theta_W=1/16$, $\kappa=1000$ and $w=\bar w$. \rev{The rational bounds in Definition~\ref{def:schedule} are obtained by rational arithmetic and comparisons on numbers of polynomial bit length. The smallest power of two strictly above their maximum is found from its binary length and an exact comparison, so it too is computable in polynomial time and has polynomial bit length.}
Hence the parameter choice is computable in polynomial time.
\end{proof}

The main result follows.
\begin{theorem}[NP-hardness]\label{thm:nphard}
Unconstrained SOFS is NP-hard.
\end{theorem}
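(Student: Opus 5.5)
The plan is to obtain Theorem~\ref{thm:nphard} directly from Theorem~\ref{thm:main} by the standard composition argument. SSP is NP-complete~\cite[Problem SP13]{GJ:79}, so every problem $\Pi_0$ in NP admits a Karp reduction $f_0$ to SSP. Theorem~\ref{thm:main} supplies a polynomial-time map $f$ from SSP instances $(a,S)$ to triples $(A,B,C)$. This map preserves YES and NO answers with respect to unconstrained SOFS as in Definition~\ref{def:sofs}. The composition $f\circ f_0$ is then a Karp reduction from $\Pi_0$ to SOFS, and by definition this makes SOFS NP-hard.

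The first step is to verify that the composition runs in polynomial time. The output length of $f_0$ is bounded by its running time, hence polynomial in $\len(\mathcal I)$. The running time of $f$ is polynomial in the length of its input. Composing two polynomials gives a polynomial, so $f\circ f_0$ runs in time polynomial in $\len(\mathcal I)$.

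The second step is correctness. $\mathcal I$ is YES for $\Pi_0$ iff $f_0(\mathcal I)$ is YES for SSP, and this holds iff $f(f_0(\mathcal I))$ is YES for SOFS. The second equivalence uses Propositions~\ref{prop:sufficiency} and~\ref{prop:necessity} together with Lemma~\ref{lem:realization}, which identifies Hurwitz stability of $A+B\K C$ with that of $\Ptrue(\cdot,\K)$.

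I also need to check that the output of $f$ is a legitimate SOFS instance. The matrices must be rational with an encoding of polynomial length, which Lemma~\ref{lem:realization} and the size argument in Theorem~\ref{thm:main} give. The gain must be unrestricted in $\mathbb R^{1\times N}$, which matches Definition~\ref{def:sofs} with $m=1$ and $\ell=N$. Degenerate SSP inputs, for instance $N=0$, can be mapped to fixed YES or NO plants.

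The only potential obstacle lies upstream, in the polynomial bit length of $\omega_1$ via $\log(1/\bar m)$. Theorem~\ref{thm:main} already settles this, so the remaining argument is purely formal. I would note that no NP membership is claimed.
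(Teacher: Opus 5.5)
Your proposal is correct and follows the paper's proof: SSP is NP-complete, Theorem~\ref{thm:main} supplies a Karp reduction from SSP to unconstrained SOFS, and composing Karp reductions gives NP-hardness. Your extra checks are harmless refinements that the paper leaves implicit: the polynomial running time of the composed map, and sending degenerate inputs such as $N=0$ (where $\eps=1/(8\sum_i a_i)$ is undefined) to a fixed NO plant.
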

\begin{proof}
SSP is NP-complete, hence NP-hard, and the map of Theorem~\ref{thm:main} is a
Karp reduction from SSP to unconstrained SOFS. By composition of Karp reductions, this proves the claim.
\end{proof}

\section{Concluding remarks}\label{sec:concl}
Theorem~\ref{thm:nphard} states that unconstrained static output feedback
stabilization is NP-hard. The reduction of Theorem~\ref{thm:main} maps a Subset Sum instance with $N$ items to a plant with one input, $N$ outputs, and $10N+3$ states. Hardness transfers to plants with $N$ inputs and one output. The matrices $A+B\K C$ and $A^{\mathsf T}+C^{\mathsf T}\K^{\mathsf T}B^{\mathsf T}$ have the same eigenvalues, and $\K\mapsto\K^{\mathsf T}$ is a bijection of the gain sets. Hence $(A,B,C)$ is a YES instance of SOFS if and only if $(A^{\mathsf T},C^{\mathsf T},B^{\mathsf T})$ is. Composing the map of Theorem~\ref{thm:main} with the transposition therefore gives a Karp reduction whose images have $N$ inputs and one output.

We conclude this paper by making a straightforward connection with another
complexity class, namely PSPACE. First, we have the following definition. 
\begin{definition}[{PSPACE \cite[Defs.~4.1 and~4.5]{AB:09}}]\label{def:pspace}
A decision problem $\Pi$ is in PSPACE if there exist a polynomial $\pi$ and a
Turing machine that decides $\Pi$ and, on every instance $\mathcal{I}$, uses at
most $\pi(\len(\mathcal{I}))$ cells of its work tapes.
\end{definition}
Unconstrained SOFS lies in PSPACE.
By Lyapunov's theorem, $A+B\K C$ is Hurwitz stable if and only if there exists a
symmetric $P\succ0$ with $(A+B\K C)^{\mathsf T}P+P(A+B\K C)\prec0$.
A symmetric matrix is positive definite if and only if it equals
$LL^{\mathsf T}$ for a lower-triangular $L$ with positive diagonal entries.
An instance is therefore equivalent to an existential sentence over the reals.
Its (quantified) variables are the entries of $\K$, of $P$, and of two lower-triangular
factors.
Its constraints are polynomial equations and strict inequalities of degree at
most two, with coefficients rational in the data, and its length is polynomial
in the instance size.
Deciding such sentences is in PSPACE~\cite{Can:88}.

Three questions remain open.
The first is hardness in the strong sense.
The constructed plant contains rationals whose magnitudes are exponential in the
instance size.
Two features of the parameter choice cause this.
The frequency $\omega_1$ exceeds a constant multiple of $(\sum_ia_i)^4$ by
Definition~\ref{def:schedule}, since $1/\bar m\ge1/M_b(\theta)$ with
$\theta=1/(16\sum_ia_i)$, so it is exponential in the bit length of the $a_i$.
The spacing $\omega_i=4^{i-1}\omega_1$ makes $\omega_N$ exponential in $N$.
A hardness proof in the strong sense would also need a source problem that is
strongly NP-hard, which Subset Sum is not.
It is not known whether unconstrained SOFS remains NP-hard when the magnitudes
of all numerators and denominators of the plant entries are bounded by a
polynomial in the instance size.
The second is membership in NP, which would follow if every stabilizable
rational instance admitted a stabilizing gain of bit length polynomial in the
instance size.
The YES instances of this note are stabilized by the integer gains
$\hat\K\in\{1,2\}^{1\times N}$.
The third is completeness for the existential theory of the reals.
Deciding the solvability of a conjunction of strict polynomial inequalities is
polynomial-time equivalent to the full theory~\cite[Thm.~4.1]{SS:17}, so the
openness of the set of stabilizing gains is no obstruction to a reduction in
the converse direction.
The present construction realizes only two constraint types, membership of a
gain in a fixed union of two intervals and a two-sided bound on a weighted sum
of the gains.
A hardness proof for the existential theory would in addition need a block,
driven by three gains $\ki_a,\ki_b,\ki_c$, that is closed-loop stable if and
only if $\abs{\ki_a\ki_b-\ki_c}<\delta$, together with error control for the
coupled blocks analogous to Theorem~\ref{thm:localized}.
\section*{Declaration on generative AI}
OpenAI ChatGPT 5.6 and Claude Opus 5 were used for adversarial review and writing assistance during the writing of this  as well is writing the search scripts for finding the parameters of Definition \ref{def:schedule}. They were not used for idea generation nor proof writing. The authors assume responsibility for all the content.
\appendices
\section{Explicit construction of a prototype $\Hb$} \label{app:Hb}
We first solve the interpolation problem defining $\Hb$, and then verify the properties of Proposition~\ref{prop:Hb}. Throughout, $\eps\in(0,1/8]$. For
\[
P_\ki(p)=p^5+c_4p^4+c_3p^3+c_2p^2+c_1p+c_0,
\]
the coefficients are
\[
\begin{gathered}
c_4=d_4,\quad 
 c_3=d_3-\ki,\quad 
c_2=d_2+\ki n_2,\\
c_1=d_1+\ki n_1,\quad 
 c_0=d_0+\ki n_0.
\end{gathered}
\]
At a nonzero imaginary frequency, setting $y=\nu^2$ gives
\[
P_\ki(\J\nu) =c_4y^2-c_2y+c_0 +\J\nu(y^2-c_3y+c_1).
\]
Consequently, the interpolation conditions can be written as the following linear system with \emph{rational} coefficients:
\[
d_0+(1-\eps)n_0=0,
\]
\[
\begin{gathered}
d_4y^2-(d_2+g n_2)y+d_0+g n_0=0,\\
 y^2-(d_3-g)y+d_1+g n_1=0
\end{gathered}
\]
for
\[
(y,g)\in \{(1,1+\eps),(2,2-\eps),(3,2+\eps)\},
\]
and
\[
n_0+3n_1+9n_2=27.
\]
Order the unknowns as $(d_0,\dots,d_4,n_0,n_1,n_2)$ and the equations as displayed, taking the real equation before the imaginary equation for each pair $(y,g)$. The determinant of the coefficient matrix is $42(4\eps-1)^2$, which is nonzero on $(0,1/8]$. Thus the interpolation problem has a unique solution. 

Set
\[
\lambda\coloneqq \frac{2-11\eps}{1-4\eps}.
\]
Since $\lambda'(\eps)=-3/(1-4\eps)^2$, we have $\lambda\in[5/4,2)$. Solving the system gives
\[
N_b(p) =-p^3+\frac{8\lambda}{7}p^2 +\frac{8\eps+1}{1-4\eps}p+\frac{12\lambda}{7},
\]
and
\[
\begin{aligned}
D_b(p)={}&p^5+\frac{2(4\eps+1)\lambda}{7}p^4+\frac{7-21\eps-4\eps^2}{1-4\eps}p^3\\
&-\frac{6(2-11\eps)}{7}p^2 +\frac{4-23\eps-8\eps^2}{1-4\eps}p -\frac{12(1-\eps)\lambda}{7}.
\end{aligned}
\]
These expressions can also be verified by substitution into the eight equations. For rational $\eps$, they require a fixed number of rational arithmetic operations, and their numerators and denominators have bit length polynomial in that of $\eps$. This proves the asserted effective construction. The interpolation conditions prescribe four crossing gains. The following lemma shows that there are no others.

\begin{lemma}[Crossing gains]\label{lem:cross}
Let $\eps\in(0,1/8]$. The polynomial $P_\ki=D_b+\ki N_b$ has a root on the imaginary axis if and only if $\ki\in\{1-\eps,1+\eps,2-\eps,2+\eps\}$.
\end{lemma}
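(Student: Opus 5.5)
The plan is to split the imaginary axis into the origin and the punctured axis, and to reduce the nonzero case to a common-root condition for two real quadratics in $y=\nu^2$. This condition is decided by their resultant, which is a polynomial in $\ki$.

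\emph{The origin.} We have $P_\ki(0)=c_0=d_0+\ki n_0$. From the explicit coefficients, $n_0=12\lambda/7\neq0$ and $d_0=-(1-\eps)n_0$. Hence $c_0=\frac{12\lambda}{7}\bigl(\ki-(1-\eps)\bigr)$, so $p=0$ is a root exactly when $\ki=1-\eps$.

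\emph{Nonzero axis roots.} For $\nu\neq0$, write $P_\ki(\J\nu)=f(y)+\J\nu h(y)$ with
\[
f(y)=c_4y^2-c_2y+c_0,\qquad h(y)=y^2-c_3y+c_1 .
\]
A root $\J\nu$ with $\nu\ne0$ exists if and only if $f$ and $h$ have a common root $y>0$. Since $c_4=d_4>0$, both quadratics have degree exactly two, so the $4\times4$ Sylvester resultant $\Res_y(f,h)$ is defined. I would compute it symbolically as a polynomial in $\ki$. Its degree is at most four, because each coefficient of $f$ and $h$ is affine in $\ki$.

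By the interpolation conditions, the common roots $y=1,2,3$ occur at $\ki=1+\eps,2-\eps,2+\eps$, so these three values are roots of the resultant. The key claim is that, after factoring,
\[
\Res_y(f,h)=\frac{8(4\eps+5)^2(2-11\eps)^2}{49(1-4\eps)^3}\,(\eps-\ki+1)(\eps-\ki+2)(\eps+\ki-2)\cdot(\text{sign}),
\]
which is a cubic in $\ki$. I expect the quartic coefficient to cancel. This is the formula later quoted in Lemma~\ref{lem:explicit}. The constant prefactor is nonzero on $(0,1/8]$, so the resultant vanishes only at those three gains.

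Conversely, at each of the three gains the interpolation conditions exhibit the common positive root $y\in\{1,2,3\}$, giving the axis root $\J\sqrt y$.

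\emph{Main obstacle.} The hard part is the symbolic resultant computation: verifying that the $\ki^4$ coefficient cancels and identifying the exact constant. I would do this by direct expansion of the Sylvester determinant with the coefficients from the appendix, using $\lambda$ to simplify. As a cross-check, the leading coefficient can be computed separately, and the value at one extra gain (for instance $\ki=0$, i.e.\ $\Res_y$ for $D_b$) can be compared with the formula.

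If the $\ki^4$ term did not cancel, a fourth root $\ki_4$ would remain. I would then have to show that at $\ki_4$ the common root $y$ is nonpositive or nonreal. It can be found explicitly as the root of the linear remainder $f-c_4h$, and its sign checked for $\eps\in(0,1/8]$.
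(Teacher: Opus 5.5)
Your proposal is correct and follows the paper's proof step for step. The origin is handled via $c_0=\frac{12\lambda}{7}\bigl(\ki-(1-\eps)\bigr)$, nonzero axis roots via the factored resultant $\Res_y(f,h)$, and the converse via the interpolation conditions. The $\ki^4$ term you worried about never arises, because the first column of the Sylvester matrix has entries $c_4,0,1,0$, none of which depend on $\ki$. Hence the resultant is at most cubic in $\ki$ and vanishes at the three distinct gains $1+\eps,2-\eps,2+\eps$. It is therefore determined by its leading coefficient $(d_4+n_2)(n_0+n_1n_2)=\frac{8(4\eps+5)^2(2-11\eps)^2}{49(1-4\eps)^3}>0$, which shortens the direct computation considerably.
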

\begin{proof}
Since $c_0=d_0+\ki n_0=\frac{12\lambda}{7}\bigl(\ki-(1-\eps)\bigr)$ and
$\lambda>0$, the polynomial $P_\ki$ has a root at $p=0$ if and only if
$\ki=1-\eps$.
A root at $p=\J\nu$ with $\nu\neq0$ means, with $y\coloneqq \nu^2>0$,
\[
f(y)\coloneqq c_4y^2-c_2y+c_0=0
\quad\text{and}\quad
h(y)\coloneqq y^2-c_3y+c_1=0 ,
\]
because the real part of $P_\ki(\J\nu)$ equals $f(\nu^2)$ and the imaginary part
equals $\nu\,h(\nu^2)$.
A common root of $f$ and $h$ leads to $\Res_y(f,h)=0$.
A direct computation gives
\[\begin{aligned}
\Res_y(f,h)&=-\frac{8(4\eps+5)^2(11\eps-2)^2}{49(4\eps-1)^3}\\
&\quad\cdot(\eps-\ki+1)(\eps-\ki+2)(\eps+\ki-2).
\end{aligned}\]
The prefactors do not vanish on $(0,1/8]$.
Hence, if $P_\ki$ has a root $\J\nu$ with $\nu\neq0$, then $\Res_y(f,h)=0$,
and consequently $\ki\in\{1+\eps,\,2-\eps,\,2+\eps\}$.
Conversely, direct substitution gives $f(1)=h(1)=0$ at $\ki=1+\eps$,
$f(2)=h(2)=0$ at $\ki=2-\eps$, and $f(3)=h(3)=0$ at $\ki=2+\eps$.
So $P_\ki$ has the roots $\pm\J$, $\pm\J\sqrt2$ and $\pm\J\sqrt3$ at these three
gains, and it has the root $0$ at $\ki=1-\eps$.
\end{proof}
\begin{proof}[Proof of Proposition~\ref{prop:Hb}]
We first establish the stabilizing intervals in \ref{propitem:hurwitz2}. By Lemma~\ref{lem:cross}, the number of roots in the open right half-plane is constant on each connected component of
\[
\mathbb{R}\setminus\{1-\eps,1+\eps,2-\eps,2+\eps\}.
\]
Indeed, the coefficients depend continuously on $\ki$, and no root crosses the imaginary axis within a component. Since $P_\ki$ is monic of fixed degree, its roots remain
bounded on every compact gain interval. It therefore suffices to determine stability at one gain in each of the five components. We use Routh's criterion in the form of~\cite[Vol.~2, Ch.~XV, Sec.~3, p.~180]{Gan:59}. The first column of the Routh array is
\[
1,\quad c_4,\quad b_1,\quad b_2,\quad b_3,\quad c_0,
\]
where direct calculation gives
\[
\begin{gathered}
b_1=\frac{(4\eps+5)(\ki_c-\ki)}{4\eps+1},
\\
\ki_c\coloneqq \frac{2-5\eps-4\eps^2}{1-4\eps}
=2+\eps+\frac{2\eps}{1-4\eps},
\end{gathered}
\]
\[\begin{gathered}
b_2=\frac{8\lambda\,\eta(\ki)}{7(1-4\eps)(\ki-\ki_c)},\\
\begin{aligned}
\eta(\ki)&\coloneqq(1-4\eps)\ki^2+(12\eps-3)\ki\\
&\quad+(4\eps^3+\eps^2-10\eps+2),
\end{aligned}\end{gathered}\]
and
\[
b_3=
\frac{(4\eps+5)((1+\eps)-\ki)((2+\eps)-\ki)
      (\ki-(2-\eps))}
     {2\eta(\ki)}.
\]
Each expression is used only where the preceding pivots are nonzero. We have $c_4>0$, $\ki_c>2+\eps$, and
\[
\eta(1-\eps)=\eta(2+\eps)=-\eps(1+2\eps)<0.
\]
Since $\eta$ is a quadratic with positive leading coefficient, it follows that $\eta<0$ on $[1-\eps,2+\eps]$.

At $\ki=1$ and $\ki=2$, the pivots $b_1,b_2,c_0$ are positive, and the numerator of $b_3$ is negative. Thus $b_3>0$ and both polynomials are Hurwitz. At $\ki=0$, the constant coefficient
is negative, which excludes Hurwitz stability. At $\ki=3/2$, the pivots $b_1,b_2$ are positive and $b_3<0$, so the polynomial is not Hurwitz. Finally, at $\ki=\ki_c+1$, the pivot $b_1$ is negative, while the preceding pivots $1,c_4$ are positive, which again excludes Hurwitz stability.

The gains $0$, $1$, $3/2$, $2$, and $\ki_c+1$ belong to the five components, respectively. Constancy of the root count therefore identifies the two stabilizing components as $(1-\eps,1+\eps)$ and $(2-\eps,2+\eps)$. The four boundary gains have imaginary-axis roots by Lemma~\ref{lem:cross}, so none is Hurwitz. This proves \ref{propitem:hurwitz2}, including the assertion about imaginary-axis roots.

For \ref{propitem:nothurwitz}, we evaluate the same Routh array at $\ki=0$. Since
\[
\eta(0)=4\eps^3+\eps^2-10\eps+2>0,
\]
its first-column signs are
\[
+,\quad +,\quad +,\quad -,\quad -,\quad -.
\]
There are no imaginary-axis roots by Lemma~\ref{lem:cross}, and hence Routh's theorem gives exactly one root in the open right half-plane. Since nonreal roots occur in conjugate pairs, this root is real. Moreover,
\[
\begin{gathered}
D_b(0)=-\frac{12(1-\eps)\lambda}{7}<0,
\\
D_b(1)=
\frac{4(13-18\eps-142\eps^2)}{7(1-4\eps)}>0
\end{gathered}
\]
for $\eps\in(0,1/8]$, so the root lies in $(0,1)$.

For \ref{propitem:hurwitz}, define
\[
\zeta_1\coloneqq \frac{4\eps+5}{7(1-4\eps)}\ge\frac57, \qquad \zeta_0\coloneqq \frac{4\lambda}{7}>0.
\]
Expanding the proposed factorization gives
\[
-(p-3)(p^2+\zeta_1p+\zeta_0)
=-p^3+(3-\zeta_1)p^2+(3\zeta_1-\zeta_0)p+3\zeta_0.
\]
The identities
\[
3-\zeta_1=\frac{8\lambda}{7},\qquad
3\zeta_1-\zeta_0=\frac{1+8\eps}{1-4\eps},\qquad
3\zeta_0=\frac{12\lambda}{7}
\]
show that this polynomial equals $N_b$. The quadratic factor is Hurwitz because both coefficients are positive, so $p=3$ is the only numerator root in $\CRHP$ and is simple.

We next establish the coefficient bounds in \ref{propitem:coeff}, together with the individual estimates used in the margin calculation:
\begin{equation}\label{eq:coeffbounds}
\begin{gathered}
0<d_4\le6/7,\quad 7\le d_3\le9,\quad \abs{d_2}\le12/7,\\
0<d_1\le4,\quad \abs{d_0}\le24/7,\\
0<n_2\le16/7,\quad 0<n_1\le4,\quad 0<n_0\le24/7.
\end{gathered}
\end{equation}
The bounds on $d_4,n_2,n_0,d_0$ follow from $\lambda\in[5/4,2)$ and $4\eps+1\le3/2$. The bound on $d_2$ follows from $0<2-11\eps<2$. For $d_3$, the upper and lower bounds are equivalent to
\[
15\eps\le2+4\eps^2,\qquad 4\eps^2\le7\eps,
\]
respectively, and both hold on $(0,1/8]$. The numerator $4-23\eps-8\eps^2$ is positive on this interval, and $d_1\le4$ follows from $7\eps+8\eps^2\ge0$. Finally, $0<n_1\le4$ follows from $1+8\eps\le2$ and $1-4\eps\ge1/2$. Summing these estimates gives
\[
S_d\le\frac67+9+\frac{12}{7}+4+\frac{24}{7}=19,
\qquad
S_n\le\frac{16}{7}+4+\frac{24}{7}<10.
\]

It remains to prove \ref{propitem:noroots}. The disk $\{p:\abs{p-3}\le7/8\}$ lies in $\{p:\operatorname{Re}p\ge17/8\}$, whereas every root of $D_b$ is either in $\OLHP$ or in $(0,1)$ by \ref{propitem:nothurwitz}. Thus $D_b$ has no root in this disk. To obtain the lower bound on $\Hb$, consider $5/8\le\abs{p-3}\le7/8$. Then $\operatorname{Re}p\ge17/8$ and $\abs{p}\le31/8$. The discriminant of the quadratic numerator factor is
\[
\zeta_1^2-4\zeta_0 =-\frac{4912\eps^2-2168\eps+199}{49(1-4\eps)^2}<0.
\]
Indeed, the polynomial in the numerator is decreasing on $[0,1/8]$ and equals $19/4$ at $\eps=1/8$. The two quadratic roots therefore have real part $-\zeta_1/2\le-5/14$, and their distances from $p$ are each at least $17/8+5/14$. Hence
\[
\abs{N_b(p)} \ge\frac58\left(\frac{17}{8}+\frac{5}{14}\right)^2>\frac{15}{4}.
\]
The individual coefficient bounds give
\[
\begin{aligned}
\abs{D_b(p)} &\le \left(\frac{31}{8}\right)^5 +\frac{6}{7}\left(\frac{31}{8}\right)^4 +9\left(\frac{31}{8}\right)^3\\
&\quad +\frac{12}{7}\left(\frac{31}{8}\right)^2 +4\left(\frac{31}{8}\right)+\frac{24}{7} <1636.
\end{aligned}
\]
Consequently,
\[
\abs{\Hb(p)} >\frac{15}{6544}>\frac{1}{500},
\]
which completes the proof.
\end{proof}
\section{Estimates for frequency separation} \label{app:equiv}
We prove the estimates used in Section~\ref{sec:equiv}, with the notation and parameter ranges introduced there. The scalar decay bounds control the loop contributions away from their assigned bands, while the root bounds ensure that each comparison region contains all relevant roots of its scalar factor.
\begin{lemma}\label{lem:decay}
Let $p,s\in\mathbb{C}$. The following hold:
\begin{enumerate}[label=(\roman*)]
    \item If $\abs p\ge R$, then $\abs{\Hb(p)}\le 4/\abs p^{2}$.
        \label{lemitem:decp}
    \item If $\abs s\ge R_H$, then $\abs{\Hsum(s)}\le 4/\abs s$. 
        \label{lemitem:decs}
\end{enumerate}
\end{lemma}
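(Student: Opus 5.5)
The plan is to bound numerator and denominator separately, using only the coefficient sums $S_n,S_d$ from Proposition~\ref{prop:Hb}\ref{propitem:coeff} and the explicit form of $\Hsum$. Each part reduces to a crude triangle-inequality estimate.

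For \ref{lemitem:decp}, fix $\abs p\ge R=\max\{S_n,2S_d\}$. Then $\abs p\ge 2S_d\ge 14$, so $\abs p\ge1$.

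\emph{Numerator.} Since $N_b(p)=-p^3+n_2p^2+n_1p+n_0$, I will bound
\[
\abs{N_b(p)}\le\abs p^3+S_n\abs p^2\le 2\abs p^3,
\]
using $\abs p^j\le\abs p^2$ for $j\le2$ and $S_n\le\abs p$.

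\emph{Denominator.} Since $D_b(p)=p^5+\sum_{j\le4}d_jp^j$, the reverse triangle inequality gives
\[
\abs{D_b(p)}\ge\abs p^5-S_d\abs p^4\ge\abs p^5-\tfrac12\abs p^5=\tfrac12\abs p^5,
\]
using $S_d\le\abs p/2$.

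Dividing, $\abs{\Hb(p)}\le 4/\abs p^2$.

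For \ref{lemitem:decs}, fix $\abs s\ge R_H=2(1+2\Wstar+(\Wstar)^2)$. Note $R_H\ge 8$, so $\abs s\ge 1$.

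\emph{Numerator.} $\abs{s^2-s}\le 2\abs s^2$.

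\emph{Denominator.} Here $\Dsum(s)=s^3+2\Wstar s+q_0$ with $0<q_0<(\Wstar)^2$, so
\[
\abs{\Dsum(s)}\ge\abs s^3-2\Wstar\abs s-(\Wstar)^2.
\]
Since $\abs s\ge1$, we have $2\Wstar\abs s+(\Wstar)^2\le(2\Wstar+(\Wstar)^2)\abs s^2\le\tfrac12 R_H\abs s^2\le\tfrac12\abs s^3$. Hence $\abs{\Dsum(s)}\ge\tfrac12\abs s^3$.

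Dividing, $\abs{\Hsum(s)}\le 4/\abs s$.

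Nothing here is a real obstacle. The only point to check is that both $R$ and $R_H$ are at least $1$, so that all lower powers can be dominated by the top power. This holds because $R\ge 2d_3\ge14$ and $R_H\ge8$.
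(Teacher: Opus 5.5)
Your proof is correct and is essentially the paper's argument. Part (i) is the same triangle-inequality estimate using $S_n\le\abs p$ and $2S_d\le\abs p$. In part (ii) you absorb both lower-order terms of $\Dsum$ into $\tfrac12R_H\abs s^2\le\tfrac12\abs s^3$, whereas the paper bounds $2\Wstar/\abs s^2$ and $q_0/\abs s^3$ separately; this is only a bookkeeping difference, and your explicit check that $\abs p,\abs s\ge1$ spells out a step the paper uses tacitly.
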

\begin{proof}
    \ref{lemitem:decp} For $\abs p\ge R$ we have $\abs{N_b(p)}\le\abs p^3+S_n\abs p^2\le2\abs p^3$, since $S_n\le R\le\abs p$, and $\abs{D_b(p)}\ge\abs p^5-S_d\abs p^4\ge\abs p^5/2$, since $2S_d\le R\le\abs p$. Hence $\abs{\Hb(p)}\le4/\abs p^2$.

\ref{lemitem:decs} For $\abs s\ge R_H$ we have $\abs{s^2-s}\le2\abs s^2$, since $\abs s\ge2$, and
\[
\abs{\Dsum(s)}\ge\abs s^3\bigl(1-2\Wstar/\abs s^2-q_0/\abs s^3\bigr)\ge\abs s^3/2 ,
\]
since $R_H=2(1+\Wstar)^2$ and $\Wstar\ge1$ give $2\Wstar/R_H^2\le1/32$ and
$q_0/R_H^3\le1/512$.
Hence $\abs{\Hsum(s)}\le4/\abs s$.
\end{proof}
\begin{lemma}[Root location]\label{lem:loc}
Assume \ref{assitem:A2}.
For every $i\in\{1,\dots,N\}$ and every $\ki\in\mathbb{R}$ with
$\abs{\ki}\le\kappa$, all roots of $D_i$ and of $D_i+\ki N_i$ lie in the annulus
$\{2\omega_i/3\le\abs{s}\le3\omega_i/2\}$.
That annulus is contained in the interior of
$\{\omega_i/2\le\abs s\le2\omega_i\}$.
For every $\W\in\mathbb{R}$ with $\abs{\W}\le w$, all roots of $\Dsum$ and of
$\Dsum+\W\Nsum$ satisfy $\abs{s}\le R_W(w)\le\omega_1/4$.
\end{lemma}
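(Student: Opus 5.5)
Every root in the statement is either a root of a scalar closed-loop polynomial (with $\ki=0$ giving the open-loop denominator), or its band-pass lift through Proposition~\ref{prop:bpmap}\ref{propitem:bpmap2}. So the plan is to bound roots of the scalar polynomials by Cauchy's bound and then transport those bounds through the quadratic $s^2-B_wqs+\omega_i^2=0$.

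\textbf{Band roots.} First, for $\abs\ki\le\kappa$, the polynomial $D_b+\ki N_b$ is monic of degree $5$. Its non-leading coefficients are $d_j+\ki n_j$ (with $n_3=-1$), so their absolute values sum to at most $S_d+\kappa(1+S_n)$. Cauchy's bound then gives every root $q$ the estimate $\abs q\le\max\{1,\sum\abs{\text{coeffs}}\}\le Q(\kappa)$. By Proposition~\ref{prop:bpmap}\ref{propitem:bpmap2}, each root $s$ of $D_i+\ki N_i$ solves $s^2-B_wqs+\omega_i^2=0$; taking $\ki=0$ covers $D_i$ as well. The two solutions have product $\omega_i^2$ and sum $B_wq$, with $\abs{B_wq}\le B_wQ(\kappa)\le\omega_1/4\le\omega_i/4$ by \ref{assitem:A2}. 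Write $\beta\coloneqq\abs{B_wq}/\omega_i\le1/4$ and $r=\abs s/\omega_i$. From $s^2+\omega_i^2=B_wqs$ we get $\abs{r^2-1}\le\abs{s^2/\omega_i^2+1}=\beta r$. Hence $r^2-\beta r-1\le0$ and $r^2+\beta r-1\ge0$, which give
\[
r\le\frac{\beta+\sqrt{\beta^2+4}}{2}\le\frac{1/4+\sqrt{65/16}}{2}<\frac32 .
\]
Since the product of the two moduli is $1$ in these units, the other root satisfies $r\ge 1/(3/2)=2/3$; alternatively one can use $r\ge(-\beta+\sqrt{\beta^2+4})/2$, which is $\ge 0.88$. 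Containment of $[2/3,3/2]$ in the open interval $(1/2,2)$ is immediate.

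\textbf{Summing roots.} For $\abs\W\le w$, the polynomial $\Dsum+\W\Nsum=s^3+\W s^2+(2\Wstar-\W)s+q_0$ has coefficient moduli summing to at most
\[
\abs\W+2\Wstar+\abs\W+q_0\le 2w+2\Wstar+(\Wstar)^2 .
\]
That is slightly larger than $R_W(w)-1=w+2\Wstar+(\Wstar)^2$, so I would refine the bound. One option is to write the polynomial as $s^3+\W(s^2-s)+2\Wstar s+q_0$ and apply Cauchy's bound in the form $\abs s\le 1+\max_j\abs{a_j}$. Since $\abs{a_j}\le w+2\Wstar+(\Wstar)^2$, this gives $\abs s\le1+w+2\Wstar+(\Wstar)^2=R_W(w)$, which is exactly the stated expression. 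For $\W=0$ the same bound covers $\Dsum$. Finally, \ref{assitem:A2} gives $R_W(w)\le\omega_1/4$.

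\textbf{Main obstacle.} The only delicate step is the quadratic modulus estimate in the band case, namely checking that the constant $1/4$ from \ref{assitem:A2} yields $3/2$ and $2/3$; the computation above shows it does, with room to spare. I would use the $1+\max_j\abs{a_j}$ form of Cauchy's bound in both cases to match $Q$ and $R_W$ as defined.
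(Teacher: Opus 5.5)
Your proof is correct and follows essentially the same route as the paper's proof: a Cauchy bound $\abs q\le Q(\kappa)$ on the scalar roots, transport through $s^2-B_wqs+\omega_i^2=0$ with the product relation $\abs{s}\,\abs{s'}=\omega_i^2$ for the lower radius, and the $1+\max_j\abs{a_j}$ form of the bound for the cubic. Your reverse-triangle estimate $\abs{r^2-1}\le\beta r$ replaces the paper's quadratic-formula bound $\abs s\le B_wQ(\kappa)+\omega_i\le\tfrac32\omega_i$ and gives slightly sharper constants, but the argument is the same.
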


\begin{proof}
By Proposition~\ref{prop:bpmap}\ref{propitem:bpmap2} the roots come in pairs $s,s'$ with
$ss'=\omega_i^2$ and $s+s'=B_wq$, where $q$ is a root of $D_b+\ki N_b$.
The Cauchy root bound in the form $\abs q\le1+\max_j\abs{a_j}$, over the
coefficients $a_j$ of a monic polynomial, gives
$\abs q\le1+S_d+\kappa(1+S_n)=Q(\kappa)$.
Then
$\abs{s}\le\tfrac{1}{2}\bigl(B_w\abs q+\sqrt{B_w^2\abs q^2+4\omega_i^2}\bigr)\le B_wQ(\kappa)+\omega_i\le\tfrac{3}{2}\omega_i$,
using $B_wQ(\kappa)\le\omega_1/4\le\omega_i/2$.
From $\abs s\,\abs{s'}=\omega_i^2$ and $\abs{s'}\le\tfrac32\omega_i$ we get
$\abs s\ge\tfrac23\omega_i$.
The cubic bounds are the same root bound applied to
$s^3+\W s^2+(2\Wstar-\W)s+q_0$, whose coefficient moduli are at most $w$,
$2\Wstar+w$ and $(\Wstar)^2$, together with \ref{assitem:A2}.
\end{proof}
\begin{lemma}\label{lem:tail}
Let $s\in\CRHP$. Items \ref{lemitem:tail1} and \ref{lemitem:tail2} hold under \ref{assitem:A2}.
\begin{enumerate}[label=(\alph*)]
\item\label{lemitem:tail1} If $\abs s\ge\omega_1/2$, then $\abs{\Hsum(s)}\le 8/\omega_1$.
\item \label{lemitem:tail2} If $1\le j\le N$ and either $\abs s\le\omega_j/2$ or $\abs s\ge2\omega_j$, then $\abs{\Hb^j(s)}\le 16B_w^2/(9\omega_j^2)$.
\end{enumerate}
Moreover, items \ref{lemitem:tail3}, \ref{lemitem:tail4} and \ref{lemitem:tail5}  hold under Assumption~\ref{ass:DC} when $\abs{\ki_j}\le\kappa$ for all $j$ and $\abs\W\le w$.
\begin{enumerate}[resume,label=(\alph*)]
\item \label{lemitem:tail3} If $s\in\mathcal{G}$, then $\sum_{j=0}^{N}\abs{L_j(s)}\le \frac{8\abs\W}{\omega_1}
+\frac{2N\kappa B_w^2}{\omega_1^2}$.
\item \label{lemitem:tail4} If $s\in\partial\mathcal{R}_i$ for some $0\le i\le N$, then $\sum_{j\neq i}\abs{L_j(s)}\le \frac{8\abs\W}{\omega_1}
+\frac{2N\kappa B_w^2}{\omega_1^2}$. 
\item \label{lemitem:tail5} If $s$ lies on a circular arc of $\partial\mathcal{R}_i$ for some $0\le i\le N$, then $\abs{L_i(s)}\le \frac{8\abs\W}{\omega_1}
+\frac{2N\kappa B_w^2}{\omega_1^2}$. 
\end{enumerate}
\end{lemma}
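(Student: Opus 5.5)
For \ref{lemitem:tail1} I will use Lemma~\ref{lem:decay}\ref{lemitem:decs}. Condition~\ref{assitem:A2} gives $\omega_1\ge4R_H$, so $\abs s\ge\omega_1/2\ge2R_H\ge R_H$. Hence $\abs{\Hsum(s)}\le4/\abs s\le8/\omega_1$.

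For \ref{lemitem:tail2} I will bound $\abs{p_{\omega_j}(s)}$ from below and then apply Lemma~\ref{lem:decay}\ref{lemitem:decp}. Write $p_{\omega_j}(s)=(s+\omega_j^2/s)/B_w$. If $\abs s\ge2\omega_j$, then $\abs{s+\omega_j^2/s}\ge\abs s-\omega_j^2/\abs s\ge\abs s-\abs s/4=\tfrac34\abs s\ge\tfrac32\omega_j$. If $\abs s\le\omega_j/2$, the symmetric estimate gives $\abs{s+\omega_j^2/s}\ge\omega_j^2/\abs s-\abs s\ge\tfrac32\omega_j$. In either case $\abs{p_{\omega_j}(s)}\ge3\omega_j/(2B_w)$. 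By \ref{assitem:A2}, $\omega_j\ge\omega_1\ge4B_wR$, so $\abs{p_{\omega_j}(s)}\ge6R\ge R$. Lemma~\ref{lem:decay} then yields
\[
\abs{\Hb^j(s)}\le\frac{4}{\abs{p_{\omega_j}(s)}^2}\le\frac{16B_w^2}{9\omega_j^2}.
\]
Note that $s\ne0$ in the first case. At $s=0$ the function $\Hb^j$ vanishes, because $N_j$ is divisible by $s^2$ while $D_j(0)=\omega_j^{10}\ne0$, so that point is trivial.

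For \ref{lemitem:tail3}--\ref{lemitem:tail5} I will sum the bounds from \ref{lemitem:tail1} and \ref{lemitem:tail2} with the gain weights, using $\abs{L_0}\le\abs\W\,\abs{\Hsum}$ and $\abs{L_j}\le\kappa\abs{\Hb^j}$. By \ref{assitem:A1}, $\omega_j\ge4^{j-1}\omega_1$. Any band term that is off-band satisfies
\[
\abs{L_j}\le\frac{16\kappa B_w^2}{9\omega_j^2}\le\frac{2\kappa B_w^2}{\omega_1^2},
\]
and at most $N$ band terms enter each sum, which gives the second summand. The remaining work is checking which hypothesis holds in each case.

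In $\mathcal G$ every point has $\abs s>\omega_1/2$, so \ref{lemitem:tail1} applies. For each $j$, either $\abs s<\omega_j/2$ or $\abs s>2\omega_j$. This holds because $2\omega_j\le\omega_{j+1}/2$, so the annuli $[\omega_j/2,2\omega_j]$ partition the range $[\omega_1/2,2\omega_N]$ up to boundaries. Points lying between consecutive annuli therefore belong to some $\mathcal R$, and points outside all of them are in $\mathcal G$.

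On $\partial\mathcal R_i$ with $i\ge1$, the circular arcs have $\abs s=\omega_i/2$ or $2\omega_i$. For $j\ne i$, the separation \ref{assitem:A1} puts $\abs s\le\omega_j/2$ when $j>i$ and $\abs s\ge2\omega_j$ when $j<i$. On the imaginary-axis segments $\omega_i/2\le\abs s\le2\omega_i$, and the same separation applies. The summing term is covered by \ref{lemitem:tail1}, since $\abs s\ge\omega_i/2\ge\omega_1/2$.

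On $\partial\mathcal R_0$ only band terms appear in the sum, and all of them satisfy $\abs s\le\omega_1/2\le\omega_j/2$. For \ref{lemitem:tail5}, the selected loop on its own arcs is handled as follows. If $i\ge1$, item \ref{lemitem:tail2} applies with $j=i$ at $\abs s=\omega_i/2$ or $2\omega_i$. If $i=0$, the arc has $\abs s=\omega_1/2$ and \ref{lemitem:tail1} applies.

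The main obstacle is the bookkeeping that shows the boundary points of adjacent regions, for instance $2\omega_i=\omega_{i+1}/2$, still satisfy the non-strict off-band hypothesis. I expect this to work because the inequalities in \ref{lemitem:tail2} are non-strict.
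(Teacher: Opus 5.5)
Your proposal is correct and follows the paper's proof essentially step by step. It uses the same lower bound $\abs{p_{\omega_j}(s)}\ge 3\omega_j/(2B_w)\ge 6R$ fed into Lemma~\ref{lem:decay}, the same separate treatment of $s=0$, and the same case check on $\mathcal G$, $\partial\mathcal R_i$, $\partial\mathcal R_0$ and the circular arcs.

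One small correction. In $\mathcal G$ the off-band condition ($\abs s<\omega_j/2$ or $\abs s>2\omega_j$) follows directly from $s\notin\mathcal R_j$, so your partition remark is unnecessary. It is also inaccurate under \ref{assitem:A1} alone: when $\omega_{j+1}>4\omega_j$, the points between consecutive annuli lie in $\mathcal G$, not in some $\mathcal R_i$.
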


\begin{proof}
By \ref{assitem:A2}, $\abs s\ge\omega_1/2\ge2R_H$.
Lemma~\ref{lem:decay} gives $\abs{\Hsum(s)}\le4/\abs s\le8/\omega_1$, implying item \ref{lemitem:tail1}.

At $s=0$ we have $\Hb^j(0)=0$, since $N_j$ is divisible by $s^2$ and $D_j(0)\neq0$. So let $s\neq0$ and $p\coloneqq p_{\omega_j}(s)$. For any $s\neq0$, $\abs{s^2+\omega_j^2}\ge\abs{\,\abs s^2-\omega_j^2\,}$, so $\abs{p}\ge\abs{\,\abs s^2-\omega_j^2\,}/(B_w\abs s)$. If $\abs s\le\omega_j/2$ this gives $\abs{p}\ge\tfrac{3}{4}\omega_j^2/(B_w\abs s)\ge\tfrac{3}{2}\omega_j/B_w$. If $\abs s\ge2\omega_j$ it gives $\abs{p}\ge\tfrac{3}{4}\abs s/B_w\ge\tfrac{3}{2}\omega_j/B_w$. Both lower bounds are therefore at least $\tfrac32\omega_1/B_w$, which is at least $6R$ by \ref{assitem:A2}, hence at least $R$. For $\abs p\ge R$, Lemma~\ref{lem:decay} gives $\abs{\Hb(p)}\le4/\abs p^2$. Substituting the two lower bounds gives item \ref{lemitem:tail2}.

For items \ref{lemitem:tail3} and \ref{lemitem:tail4}, note first that \ref{lemitem:tail2} and $\abs{\ki_j}\le\kappa$ give, wherever \ref{lemitem:tail2} applies, $\abs{L_j(s)}=\abs{\ki_j}\abs{\Hb^j(s)}\le2\kappa B_w^2/\omega_1^2$, that \ref{lemitem:tail1} gives $\abs{L_0(s)}=\abs\W\abs{\Hsum(s)}\le8\abs\W/\omega_1$ wherever \ref{lemitem:tail1} applies. A sum of at most $N$ band terms and one summing term is therefore bounded by
\[
\frac{8\abs\W}{\omega_1}+\frac{2N\kappa B_w^2}{\omega_1^2}.
\]
It remains to verify that the corresponding estimates apply
on each of the stated sets.

Let $s\in\mathcal{G}$. Then $\abs s\ge2\omega_1\ge\omega_1/2$, so \ref{lemitem:tail1} applies to $L_0$. For $1\le j\le N$, the definition of $\mathcal{G}$ gives $\abs s\le\omega_j/2$
or $\abs s\ge2\omega_j$, so \ref{lemitem:tail2} applies to $L_j$. This proves \ref{lemitem:tail3}.

Let $s\in\partial\mathcal{R}_i$ with $i\ge1$. Then $\omega_i/2\le\abs s\le2\omega_i$, so \ref{lemitem:tail1} applies to $L_0$. For $j\ge1$ with $j\neq i$, \ref{assitem:A1} gives $\omega_j\le\omega_i/4$ or $\omega_j\ge4\omega_i$, hence $\abs s\ge2\omega_j$ or $\abs s\le\omega_j/2$, so \ref{lemitem:tail2} applies to $L_j$. Let $s\in\partial\mathcal{R}_0$. Then $\abs s\le\omega_1/2\le\omega_j/2$ for every $j\ge1$, so \ref{lemitem:tail2} applies to every band term. This proves \ref{lemitem:tail4}.

For $i\ge1$ the arcs are $\abs s=\omega_i/2$ and $\abs s=2\omega_i$, where \ref{lemitem:tail2} and $\abs{\ki_i}\le\kappa$ give $\abs{L_i(s)}\le2\kappa B_w^2/\omega_1^2$. For $i=0$ the arc is $\abs s=\omega_1/2$, where \ref{lemitem:tail1} gives $\abs{L_0(s)}\le8\abs\W/\omega_1$. Combining both proves the bound in \ref{lemitem:tail5}.
\end{proof}

\bibliographystyle{IEEEtran}
\bibliography{MyBib}
\end{document}